\documentclass[conference]{IEEEtran}
\IEEEoverridecommandlockouts
\usepackage{amsmath}
\usepackage{amssymb}
\usepackage{mathtools,amsthm}
\usepackage{comment}
\usepackage{graphicx}
\usepackage{caption}
\usepackage{subcaption}
\usepackage{hyperref}
\usepackage{xcolor}
\newtheorem{theorem}{Theorem}

\theoremstyle{definition}
\newtheorem{definition}{Definition}
\newtheorem{proposition}{Proposition}
\usepackage[ruled,linesnumbered]{algorithm2e}
\usepackage{enumitem}
\def\BibTeX{{\rm B\kern-.05em{\sc i\kern-.025em b}\kern-.08em
    T\kern-.1667em\lower.7ex\hbox{E}\kern-.125emX}}
\begin{document}

\title{Matching-with-Contracts for the AI-RAN Market: AIGC-as-a-Service for Teleoperation}

\author{Zijun~Zhan,
        Yaxian~Dong,
        Daniel~Mawunyo~Doe,
        Yuqing~Hu,
        Shaohua Cao,
        and~Zhu~Han,%  ~\IEEEmembership{Fellow,~IEEE} <-this % stops a space
        \thanks{Zijun Zhan is with the Department of Electrical and Computer Engineering, University of Houston, 4800 Calhoun Rd, Houston, TX 77004, USA. E-mail: zzhan@uh.edu} 
         \thanks{Yaxian Dong and Yuqing Hu are with the Department of Architectural Engineering, The Pennsylvania State University, University Park, PA 16802, USA (E-mail: yzd5221@psu.edu and yfh5204@psu.edu)}
	\thanks{Daniel Mawunyo Doe is with the Department of Electrical and Computer Engineering, Prairie View A\&M University, 100 University Dr, Prairie View, TX 77446, USA. Email: dmdoe@pvamu.edu}
         \thanks{Shaohua Cao is with the Qingdao Institute of Software, College of Computer Science and Technology, China University of Petroleum (East China), Qingdao 266580, China. E-mail:shaohuacao@upc.edu.cn}
         \thanks{Zhu Han is with the Department of Electrical and Computer Engineering, University of Houston, 4800 Calhoun Rd, Houston, TX 77004, USA. E-mail: hanzhu22@gmail.com}
}

\maketitle

\begin{abstract}
Artificial intelligence radio access networks (AI-RANs) are a promising architecture for bolstering the prosperity of the edge AI ecosystem. A well-designed incentive mechanism can further ensure the sustainable development of this ecosystem. However, incentive mechanism design faces two major challenges: 1) information asymmetry, where AI-RAN operators have only partial knowledge of AI users’ utility functions, and 2) competition, as multiple AI-RAN operators coexist in real-world markets. Remarkably, chaotic and adversarial competition might compromise AI-RAN operators' utility. To this end, we develop a matching-with-contracts framework for incentive mechanism design in AI-RAN service markets. The framework extends the static matching-with-contracts model by jointly characterizing the contract design of multiple competitive operators, user-operator matching, and dynamic evolution of the market state. Specifically, the incentive mechanism offered by each AI-RAN operator takes the form of a contract menu, where each contract item consists of an AI service latency agreement and a corresponding price. We model the AI service process as three independent queues and characterize the violation probability of the latency agreement using queueing theory and the Chernoff bound. To derive an effective incentive mechanism, we further propose a mixed stable matching-with-contracts algorithm that jointly updates user-side matching decisions and operator-side contract menus. Simulation results for a teleoperation-oriented AIGC service demonstrate the effectiveness and robustness of the proposed method. Compared with benchmark schemes, our method improves the total utility of AI-RAN operators by at least 56.8\% under representative settings.
\end{abstract}

\begin{IEEEkeywords}
AI-RAN, information asymmetry, incentive mechanism design, task offloading, edge AI
\end{IEEEkeywords}

\section{Introduction} \label{sec:1}
\IEEEPARstart{T}{he} rapid convergence of artificial intelligence and wireless communication is giving rise to the paradigm of artificial intelligence radio access network (AI-RAN), which aims to tightly integrate AI capabilities with radio access network operations and infrastructures \cite{khan2023airan, bonati2021intelligence, giannopoulos2022supporting}. In general, AI-RAN can be classified into three complementary categories, AI-for-RAN, AI-on-RAN, and AI-and-RAN \cite{kundu2025ai}. AI-for-RAN uses AI techniques to optimize conventional RAN functions such as scheduling, resource allocation, mobility management, and interference mitigation. AI-on-RAN treats the RAN as a service platform that can provide distributed communication and computing resources for AI task execution, thereby enabling AI services to be deployed and delivered over network infrastructures. AI-and-RAN further emphasizes the deep co-design of AI services and RAN systems, where communication and intelligence are jointly optimized in a unified architecture. Among these three directions, AI-RAN is increasingly viewed as a key enabler of future wireless networks because it not only enhances network intelligence but also opens a new path for turning the RAN into an active infrastructure for AI service provisioning.

Existing studies have extensively explored the AI-for-RAN direction, where AI methods have been introduced to improve network control and operational efficiency \cite{zhou2021ranslicing, kouchaki2022actor, rezazadeh2023madrloran, lai2023loadbalancingoran, mahmoud2024datadrivenoran, rezazadeh2024steporan}. By comparison, research on AI-on-RAN is still at an infancy stage, although it is attracting increasing attention as AI services become more computation-intensive, latency-sensitive, and geographically distributed \cite{khan2023airan, bonati2021intelligence, giannopoulos2022supporting}. Current work in this area has mainly focused on system architecture design, communication-computation resource orchestration, service placement, and performance optimization under network-side control \cite{bonati2021intelligence, giannopoulos2022supporting, filali2023commcomporan, abedin2022elastic}. While these efforts provide important technical foundations, they often abstract away the strategic behavior of participants and implicitly assume that the required resources can be coordinated without sufficient economic incentives. In practice, however, AI-on-RAN involves multiple self-interested entities with heterogeneous capabilities, costs, and service demands. Without an effective incentive mechanism, AI-RAN operators may lack motivation to participate, and AI users may not be efficiently matched with suitable RAN resources. This gap highlights the necessity of incentive mechanism design for the AI-on-RAN paradigm.

Designing such an incentive mechanism is challenging for two main reasons. First, competition is inherent in the AI-on-RAN since the market in the real-world is non-monopoly. Specifically, different AI-RAN operators may compete to attract profitable tasks under their own capacity and quality constraints. The competition makes the incentive mechanism design for AI-RAN operators highly interdependent. Second, information asymmetry is unavoidable. AI users typically possess private information about their valuations, costs, and opportunity losses, which is hidden from AI-RAN operators. If the information asymmetry is not properly handled in the incentive mechanism, strategic misreporting may distort matching outcomes and reduce overall market efficiency. Therefore, the incentive mechanism design problem in AI-on-RAN must simultaneously address competitive interactions and asymmetric information. We summarize the research question in this paper as: How can we design an effective incentive mechanism for AI-on-RAN that \textit{maximizes the sum of AI-RAN operators' utility under competition and information asymmetry?}

A natural way to tackle the aforementioned research is to leverage economic and game-theoretic tools that have been used to study resource trading and incentive design in wireless networks, edge computing, and digital platforms \cite{hatfield2005matching, zhou2019vfccontractmatching, su2020hierarchicalmec, su2021mecmatchingcontracts, lim2021towards, hui2022quality, wang2024cvflcontract, yao2025saflcontract, ye2025aigccontract, zhou2025aoicontract, picano2025llmmatching}. In particular, contract theory is well-suited to addressing information asymmetry, since it has the self-revealing property via differentiated contract items \cite{li2023book}. Matching theory, on the other hand, is effective for modeling decentralized competition and bilateral preferences in multi-agent markets\cite{lovasz2009matching}. Among related approaches, the matching-with-contracts framework is especially appealing because it integrates agent matching and contract selection into a unified market structure \cite{hatfield2005matching}. Notably, existing matching-with-contracts models cannot be directly applied to our problem since they consider a static setting and the contract menu is given ex-ante. However, the AI-on-RAN market is shaped by dynamic interactions between AI users and AI-RAN operators.

To tackle the research question posed in this paper, we develop a unified framework that combines contract design with competitive matching for AI-on-RAN service provisioning. Specifically, we consider a dynamic market in which heterogeneous AI users offload AI inference tasks to multiple competing AI-RAN operators through latency-price contract menus. We model the AI service process as a three-stage queue, which includes uplink transmission, AI inference processing, and downlink transmission. We characterize the latency agreement violation probability and the utilities of both AI users and AI-RAN operators. Subsequently, we formulate the contract design problem under asymmetric information and competition. Notably, user selection and contract menus designed by AI-RAN operators jointly shape the market outcome, which hinders the derivation of the incentive mechanism. Therefore, we propose a mixed stable matching with contracts algorithm to iteratively update contract menus and matching decisions.

In a nutshell, the main contributions of this paper are as follows.
\begin{enumerate}
    \item We propose a incentive design framework for competitive AI-on-RAN service provisioning under information asymmetry. Specifically, the proposed framework jointly captures latency-price contract design, user-side operator matching, and congestion-dependent AI service provisioning. Moreover, we model the end-to-end AI service process as three independent queues and characterize the latency agreement violation probability as a convex function via chernoff bound.

    \item We develop a corresponding algorithm to solve the proposed mechanism design problem. Specifically, we propose a mixed stable matching with contracts algorithm that jointly updates contract menus and matching decisions, and obtains market outcomes that account for both competitive allocation and asymmetric information. We further prove the existence of a mixed equilibrium, showing that the proposed dynamic matching-with-contracts formulation admits a well-defined market outcome.
    
    \item We extend the scope of the static matching-with-contracts model to a competitive and dynamic multi-principal market. Different from conventional settings where contract terms are fixed before matching, the proposed framework jointly considers contract menu design by competing principals, principal-agent matching, and market-state evolution. This captures the feedback between contract decisions and matching outcomes, where principals’ strategies influence agents’ choices, and agents’ choices in turn reshape the market environment faced by principals. 
    
    \item We conduct numerical simulation experiments to validate the effectiveness of the proposed framework and algorithm. Compared with benchmark schemes, the proposed method improves the total AI-RAN operator utility by at least $56.8\%$ and the social welfare by at least $51.7\%$ under representative congested-market settings.
\end{enumerate}

The remainder of this paper is organized as follows. Section \ref{sec:2} reviews the related literature and introduces the necessary background. Section \ref{sec:3} presents the framework illustration and system model. Section \ref{sec:4} presents the latency agreement violation model and the utility functions of AI users and AI-RAN operators. Section \ref{sec:5} formulates the incentive mechanism design problem under asymmetric information and competition. Section \ref{sec:6} develops the proposed algorithm and analyzes its computational complexity. Section \ref{sec:7} reports the simulation results. Finally, Section \ref{sec:8} concludes the paper.

\section{Related Works} \label{sec:2}
In this section, we review the literature related to this paper in two branches, AI-RAN service provisioning in Section \ref{sec:2.1} and utilizing matching theory and contract theory for incentive mechanism design in Section \ref{sec:2.2}.

\subsection{AI-RAN Service Provisioning} \label{sec:2.1}
Existing AI-RAN/O-RAN studies can be broadly understood from two perspectives. The first perspective is AI-for-RAN, where AI methods are introduced to improve network control and operational efficiency. In this direction, learning-based methods have been developed for RAN slicing, xApp design, real-time resource allocation, load balancing, data-driven configuration, and conflict resolution \cite{zhou2021ranslicing, kouchaki2022actor, rezazadeh2023madrloran, lai2023loadbalancingoran, mahmoud2024datadrivenoran, rezazadeh2024steporan}. These studies demonstrate the effectiveness of AI in optimizing conventional RAN functions under dynamic network conditions.

The second perspective is AI-on-RAN, where the RAN is treated as a distributed service platform for AI task execution. Compared with AI-for-RAN, research on AI-on-RAN is still at an infancy stage, although it is becoming increasingly important as AI services become more computation-intensive, latency-sensitive, and geographically distributed \cite{khan2023airan, bonati2021intelligence, giannopoulos2022supporting}. Existing studies have mainly investigated system architecture design, AI/ML workflow support, communication-computation resource orchestration, service placement, and performance optimization under network-side control \cite{bonati2021intelligence, giannopoulos2022supporting, filali2023commcomporan, abedin2022elastic}. These works provide important technical foundations for deploying AI services over programmable RAN infrastructures.

However, most existing studies abstract away the strategic behavior of participants and implicitly assume that resources can be coordinated once a network-side objective is specified. In practice, AI-on-RAN service provisioning may involve multiple self-interested AI-RAN operators and heterogeneous AI users. AI-RAN operators have different radio and computing capabilities, operating costs, and service qualities, while users may have private service valuations and latency sensitivities. Without an effective incentive mechanism, AI-RAN operators may lack motivation to provide AI services, and AI users may not be efficiently matched with suitable RAN/computation resources.

This gap motivates the market-oriented incentive design problem studied in this paper. Unlike existing AI-RAN/O-RAN works that mainly focus on network-side control and resource orchestration, we consider a competitive AI-on-RAN service market in which multiple AI-RAN operators design latency-price contract menus and compete for AI users under congestion-dependent service reliability.

\subsection{Matching Theory and Contract Theory for Incentive Design} \label{sec:2.2}
Contract theory is a mature tool for incentive mechanism design under asymmetric information, because a properly designed contract menu can induce agents with private types to reveal their preferences through self-selection \cite{li2023book}. Matching theory provides a complementary framework for decentralized market formation with heterogeneous agents and bilateral preferences \cite{lovasz2009matching}. The seminal work in \cite{hatfield2005matching} established the matching-with-contracts framework, showing that matching decisions and contract terms can be incorporated into a unified market model. These theoretical foundations are closely related to our problem, where AI users choose among AI-RAN operators and contract items, while AI-RAN operators design service terms under incomplete information.

In wireless and edge computing systems, contract theory and matching theory have been widely used to address resource trading, task assignment, and incentive compatibility. The authors in \cite{zhou2019vfccontractmatching} proposed a contract-matching approach for computation resource allocation and task assignment in vehicular fog computing. In \cite{su2020hierarchicalmec}, hierarchical multi-access edge computing (MEC) offloading was studied through contract design and Bayesian matching. The work in \cite{su2021mecmatchingcontracts} further adopted matching with contracts for resource trading and price negotiation in MEC. In addition, \cite{lim2021towards} designed a multi-dimensional contract-matching mechanism for federated learning in UAV-enabled Internet of Vehicles. These studies demonstrate that contract and matching tools can jointly capture incentive compatibility, decentralized association, and resource trading in edge computing environments.

These economic tools have also been introduced into edge intelligence services where incentive compatibility is essential. For federated learning, the authors in \cite{hui2022quality} developed a quality-aware incentive mechanism based on matching games. The authors in \cite{wang2024cvflcontract} designed contract-theoretic incentives for clustered vehicular federated learning. The authors in \cite{yao2025saflcontract} proposed a learning-based contract design method for semi-asynchronous federated learning. For AIGC services, the authors in \cite{ye2025aigccontract} studied contract-theoretic optimization supported by prompt engineering and edge computing. These works confirm the relevance of matching and contract theory for edge intelligence systems with heterogeneous participants, private information, and distributed resources.

Nevertheless, these studies cannot be directly applied to competitive AI-on-RAN service provisioning. First, many contract-theoretic works focus on a single principal or a single platform, while our problem involves multiple competing AI-RAN operators whose contract menus jointly determine user association and service congestion. Second, existing matching-based and contract-matching methods usually treat service quality, allocation, and payment as static terms, while AI-on-RAN service reliability depends on the traffic load induced by users' matching decisions.

In summary, our work differs from the existing literature in three main aspects. First, compared with AI-RAN/O-RAN architectural studies \cite{khan2023airan, bonati2021intelligence, giannopoulos2022supporting}, we focus on incentive design for a competitive AI-RAN service market. Second, compared with existing contract-theoretic and matching-based incentive mechanisms \cite{zhou2019vfccontractmatching, su2020hierarchicalmec, su2021mecmatchingcontracts, lim2021towards, hui2022quality, wang2024cvflcontract, yao2025saflcontract, ye2025aigccontract}, we jointly model competitive operator-side contract design, user-side matching, and congestion-dependent service reliability. Third, compared with classical matching-with-contracts theory \cite{hatfield2005matching}, we incorporate AI-RAN-specific queueing dynamics and operator-side contract menu update into the market formation process.

\section{Framework Illustration and System Model} \label{sec:3}
In this section, we first illustrate the framework of applying matching-with-contracts for the AI-on-RAN service market in Section \ref{sec:3.1}. Next, we present the system model of our proposed framework in Section \ref{sec:3.2}.

\begin{figure*}[!t]
    \centering
    \includegraphics[width=.95\linewidth]{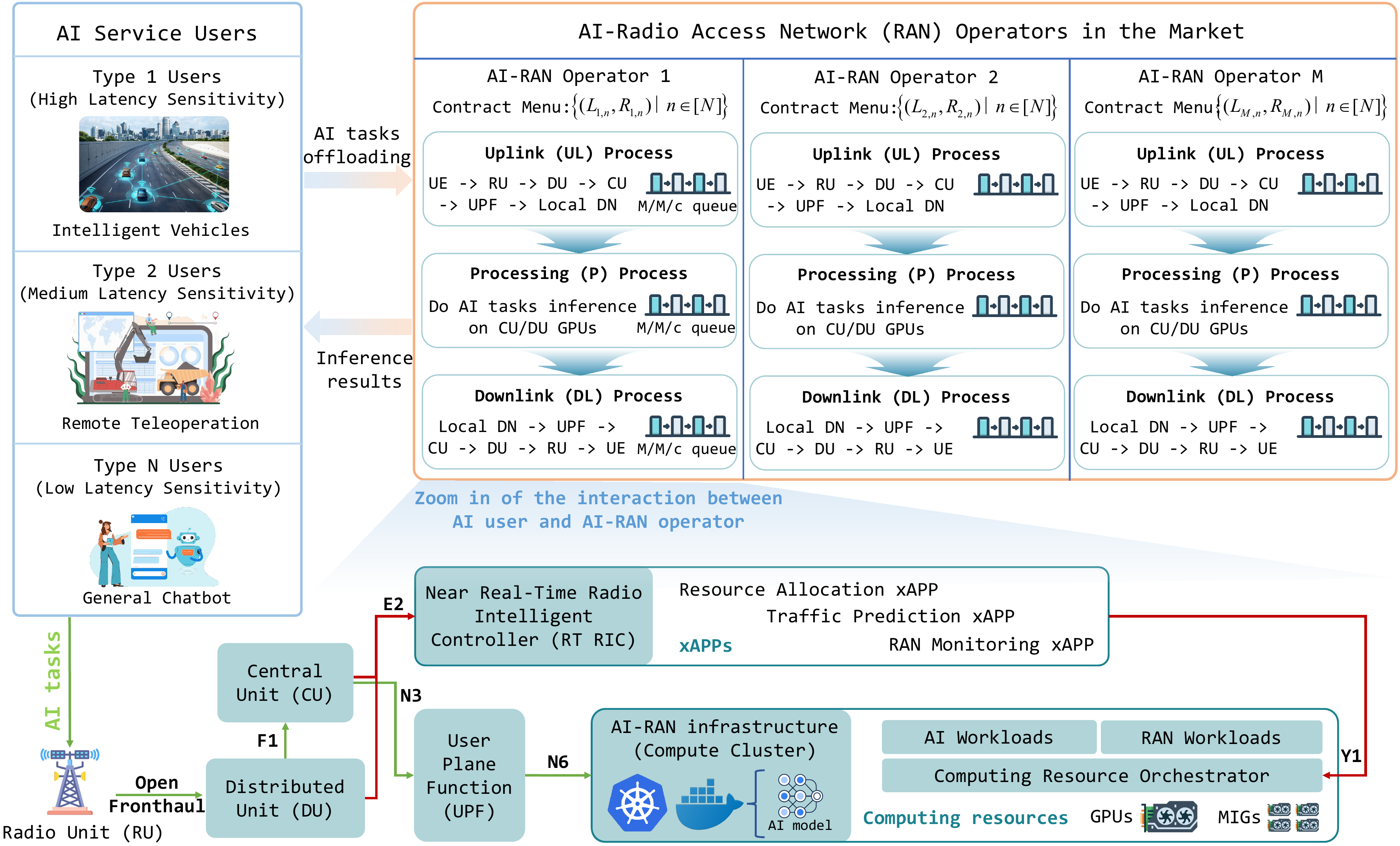}
    \caption{Framework illustration of the AI task offloading in a matching-with-contracts AI-RAN service market.}
    \label{fig:framework}
\end{figure*}
\subsection{Framework Illustration} \label{sec:3.1}
Fig.~\ref{fig:framework} illustrates the proposed matching-with-contracts for the AI-on-RAN service market framework, which captures the interaction between AI users and competing AI-RAN operators in a contract-driven service market. In this framework, AI users generate inference tasks and offload them to AI-RAN operators that integrate radio access and edge computing capabilities. We consider representative AI service scenarios, including intelligent vehicles, remote teleoperation, and general chatbot applications. However, the proposed framework is general and can be extended to other AI services.

Notably, the latency sensitivity of AI users is hidden information to the AI-RAN operators, resulting in information asymmetry. This information asymmetry hinders AI-RAN operators from formulating optimal pricing strategies. To address this challenge, AI-RAN operators leverage contract theory to design contract menus that induce truthful self-selection from AI users while maximizing AI-RAN operator profit \cite{li2023book}. Specifically, each AI-RAN operator broadcasts a contract menu consisting of latency agreements $\mathbf{L}$ and corresponding service prices $\mathbf{R}$ designed for different AI user types. Since multiple AI-RAN operators coexist in the market, each AI user selects the contract item that maximizes its utility, thereby implicitly determining the operator that will serve its tasks \cite{hatfield2005matching, rostek2019matching, macho2021agency}.

After contract selection, the AI task execution process consists of three sequential stages: uplink transmission, AI inference processing, and downlink transmission. These three stages jointly determine the end-to-end latency experienced by the AI user. To capture the impact of communication and computation resource contention, each stage is modeled as an M/M/c queue. This queueing abstraction enables tractable analysis of latency violation probability under varying traffic load and resource availability.

The zoom-in view in Fig.~\ref{fig:framework} illustrates the detailed interaction between an AI user and an AI-RAN operator. After selecting an AI-RAN operator, the AI user offloads its task from the user equipment (UE) through the uplink communication pipeline, which traverses the radio unit (RU), distributed unit (DU), centralized unit (CU), and user plane function (UPF) before reaching the AI-RAN operator-controlled computing cluster. The computing resources are co-located with the DU/CU to enable low-latency AI task inference. Moreover, GPU resources are partitioned into multiple virtual inference instances using techniques such as multi-instance GPU (MIG) \cite{kundu2025ai}, enabling fine-grained sharing between AI workloads and RAN workloads. The RU provides wireless connectivity, while the DU and CU perform baseband processing, scheduling, and protocol stack operations. These components are interconnected through standardized interfaces, including the Open Fronthaul interface between RU and DU, the F1 interface between DU and CU, and the N3 and N6 interfaces connecting the CU and UPF to the local data network hosting AI workloads. The AI inference task is executed on GPU resources, where AI models are deployed in containerized environments such as Docker. After inference, the results are transmitted back to the AI user through the downlink pipeline along the reverse path.

In addition to the data transmission pipeline, the AI-RAN system incorporates a closed-loop resource orchestration mechanism to coordinate resource allocation between RAN and AI workloads. The near-real-time radio intelligent controller (near-RT RIC) hosts multiple control applications (xApps) \cite{ko2024edgeric}, including traffic prediction, resource allocation, and network monitoring. These xApps continuously monitor network conditions and generate key performance indicator (KPI) metrics. Through the Y1 interface, these KPI metrics are delivered to the computing resource orchestrator, which dynamically determines the allocation of computing resources between RAN functions and AI inference workloads \cite{kundu2025ai, shah2025interplay}. This coordination mechanism improves computing resource utilization while ensuring communication reliability.

\subsection{System Model} \label{sec:3.2}
We consider an AI-on-RAN service market in which multiple AI service users request AI inference services from a set of competing AI-RAN operators. We utilize $\mathcal{S}_u=\{u_i \mid i \in [I]\}$ to denote the set of AI users, where $[I]=\{1,\dots,I\}$ and $u_i$ indicates the $i$-th AI user. We model the AI tasks generation process as a Poisson process, in which each AI service user $u_i$ generates AI tasks with rate $\delta$ (tasks/s). For brevity and without loss of generality, we assume the AI task generation rate is identical for AI users. We utilize a three-tuple $\{d_i, \tau, d_o\}$ to inscribe the AI task, where $d_i$ is the average AI task input size, $\tau$ is the computation workload of the AI task, and $d_o$ indicates the size of the AI service processed result. We utilize $\mathcal{S}_o=\{o_m \mid m \in [M]\}$ to represent the set of AI-RAN operators, where $[M]=\{1,\dots,M\}$ and $o_m$ indicates the $m$-th AI-RAN operator.

We assume the AI task is indivisible and therefore each AI task can only be offloaded to one AI-RAN operator. Regarding the AI task offloading process, each AI user $u_i$ selects the promised AI service among the contract menus provided by all AI-RAN operators. Akin to \cite{wen2024diffusion, zhan2025distributionally}, we model the contract menu provided by each AI-RAN operator $o_m$ in the form of \textbf{\{AI service latency agreement, AI service price\}}. Given that the AI-RAN operator might compete for the market share, the optimal contract menu of each operator is affected by other operators. Moreover, AI-RAN operators are heterogeneous in their available radio capacities and computing resources, which will also affect the contract menu formulated by the AI-RAN operator.

As we mentioned in Section \ref{sec:3.1}, the information asymmetry exists between AI-RAN operators and AI users. We consider leveraging contract theory to derive the optimal contract menu $\mathcal{C}_m$ for $o_m$. We categorize AI users in the market into $N$ types via data mining technologies \cite{lim2021towards}, which is defined as $\Theta=\{\theta_n \mid n \in [N]\}$. We use $\mathcal{I}_n$ to represent the set of type-$n$ AI users in the market, in which $\mathcal{S}_u = \{\mathcal{I}_n \mid n \in[N]\}$. We define the contract menu formulated by each AI-RAN operator $o_m$ as $\mathcal{C}_m = \{(L_{m,n},R_{m,n}) \mid n \in [N]\}$. Here, $L_{m,n}$ and $R_{m,n}$ mean the AI service latency agreement and AI service price for type-$n$ AI users designed by $o_m$, respectively. Considering the competition exist among AI-RAN operators, we utilize a binary matrix $\mathbf{A}=\{a_{m,n} \mid m \in [m], n \in [N]\}$ to represent the matching condition between $N$ types of AI users and $M$ AI-RAN operators. We consider $a_{m,n} \in \{0, 1\}$, in which $a_{m,n} = 1$ indicates the type-$n$ AI user is matched with $o_m$. Conversely, we set $a_{m,n} = 0$ when a type-$n$ AI user is not matched with $o_m$.

\section{Utility Functions of AI Users and AI-RAN Operators} \label{sec:4}
In this section, we present the AI task latency agreement violation model in Section \ref{sec:4.1}. Subsequently, we design the utility functions of AI-RAN operators and AI users in Section \ref{sec:4.2}.

\subsection{AI Task Latency Agreement Violation Model} \label{sec:4.1}
As per Fig.~\ref{fig:framework}, we consider each AI task experiences uplink, processing, and downlink three stages, which are collectively determine the end-to-end service latency perceived by the AI user. Notably, since $N$ types of AI users exist in the system, each type of AI users have different latency requirement, we consider adopting a priority scheduling policy to inscribe the latency model of each type of AI users. We adopt the preemptive-resume priority policy to construct the latency model of each type of AI users.

We use $T_{m,n}^{\mathrm{UL}}$, $T_{m,n}^{\mathrm{P}}$, and $T_{m,n}^{\mathrm{DL}}$ to denote the sojourn times of the uplink, processing, and downlink stages, respectively, when a type-$n$ AI task is served by operator $o_m$. Together, these three stages characterize the end-to-end AI service latency in AI-RAN. We model the total latency perceived by the type-$n$ AI user as
\begin{equation} \label{eq:1}
    T_{m,n} = T_{m,n}^{\mathrm{UL}} + T_{m,n}^{\mathrm{P}} + T_{m,n}^{\mathrm{DL}}.
\end{equation}
To analytically characterize the AI service latency violation probability, we model each stage as an independent M/M/$c$ queue \cite{bi2024cost, 3gpp, liao2021adaptive}. Concretely, we model the AI task uplink transmission process on the AI-RAN operator $o_m$ as an $M/M/c_m^{\mathrm{UL}}$ queue. In AI-RAN, $c_m^{\mathrm{UL}}$ represents the effective amount of uplink resources allocated to AI service traffic, measured in terms of parallel service capacity (e.g., Physical Resource Blocks or sub-channel groups). This allocation is governed by near-RT RIC, typically enforced by near-RT RIC–hosted xApps that perform dynamic scheduling and resource partitioning based on current traffic load and service requirements, while higher-level policies (e.g., AI model updates) may be configured by Non-RT RIC via rApps on a slower timescale \cite{kundu2025ai, shah2025interplay, polese2025beyond}.

For the latency analysis within one contract-design horizon, we assume $c_m^{\mathrm{UL}}$ is quasi-static. We utilize $D_m^{\mathrm{UL}}$ to represent the uplink transmission data rate between $u_i$ and $o_m$ on one sub-channel. We define the uplink service rate of one sub-channel of the AI-RAN operator $o_m$ as
\begin{equation} \label{eq:2}
    \mu_{m}^{UL} \triangleq  \frac{D_{m}^{UL}}{d_i}.
\end{equation}
We model the aggregate uplink traffic arrival rate at the AI-RAN operator $o_m$ for the type-$n$ AI user as
\begin{equation} \label{eq:3}
    \lambda_{m,n}^{\mathrm{UL}} = \lambda_{m,\leq n}^{\mathrm{UL}} \triangleq \sum_{j = 1}^{n} {|\mathcal{I}_j| a_{m,j} \delta},
\end{equation}
which indicates that only the AI users with higher priority will affect the AI task arrival rate for type-$n$ AI users.

Under the assumed $M/M/c$ uplink queuing model, we characterize the uplink sojourn time distribution for type-$n$ AI task at AI-RAN operator $o_m$ as
\begin{equation} \label{eq:4}
    \begin{aligned}
        \mathbb{P}\!\left(T_{m,n}^{\mathrm{UL}} > t\right) = & \left(1 + \frac{P_{m,n}^{\mathrm{UL}} \mu_m^{\mathrm{UL}}}{r_{m,n}^{\mathrm{UL}} - \mu_m^{\mathrm{UL}}}\right)  e^{-\mu_m^{\mathrm{UL}} t} \\ 
        &- \frac{P_{m,n}^{\mathrm{UL}} \mu_m^{\mathrm{UL}}}{r_{m,n}^{\mathrm{UL}} - \mu_m^{\mathrm{UL}}}
        e^{- r_{m,n}^{\mathrm{UL}} t},
    \end{aligned}
\end{equation}
where $P_{m,n}^{\mathrm{UL}}$ denotes the probability that an arriving AI service request experiences uplink queueing delay for type-$n$ AI users, which is given by the Erlang--C formula \cite{takacs1969erlang}. The parameter $r_{m,n}^{\mathrm{UL}}$ represents the excess uplink service capacity for type-$n$ AI users, defined as
\begin{equation} \label{eq:5}
    r_{m,n}^{\mathrm{UL}} = c_m^{\mathrm{UL}} \mu_m^{\mathrm{UL}} - \lambda_{m,n}^{\mathrm{UL}}.
\end{equation}
To ensure queue stability, we assume $r_{m,n}^{\mathrm{UL}} > 0$.

Analogously, we model the AI task computation process on $o_m$ as an $M/M/c_m^{\mathrm{P}}$ queue, where $c_m^{\mathrm{P}}$ denotes the number of GPU instances available to serve incoming AI tasks. Since baseband processing and RAN control functions are assigned higher execution priority, the computation capacity available to AI task inference is inherently constrained to the prevailing RAN workload \cite{kundu2025ai, shah2025interplay,polese2025beyond}. For analytical tractability, we consider a quasi-static control snapshot over the contract-design horizon and assume $c_m^{\mathrm{P}}$ remains fixed. We use $D_m^{\mathrm{P}}$ to represent the computation capacity of one instantiated GPU inference unit. We define the computation service rate of one instanced GPU of the AI-RAN operator $o_m$ as
\begin{equation} \label{eq:6}
    \mu_{m}^{P} \triangleq  \frac{D_{m}^{P}}{\tau}.
\end{equation}
We define the AI task arrival rate for type-$n$ AI users as $\lambda_{m,n}^{P} = \lambda_{m, n}^{\mathrm{UL}}$. Akin to the definition in \eqref{eq:4}, we define the processing time distribution at AI-RAN operator $o_m$ as
\begin{equation} \label{eq:7}
    \begin{aligned}
        \mathbb{P}\!\left(T_{m,n}^{\mathrm{P}} > t\right) = & \left(1 + \frac{P_{m,n}^{\mathrm{P}} \mu_m^{\mathrm{P}}}{r_{m,n}^{\mathrm{P}} - \mu_m^{\mathrm{P}}}\right)  e^{-\mu_m^{\mathrm{P}} t} \\ 
        &- \frac{P_{m,n}^{\mathrm{P}} \mu_m^{\mathrm{P}}}{r_{m,n}^{\mathrm{P}} - \mu_m^{\mathrm{P}}}
        e^{- r_{m,n}^{\mathrm{P}} t}.
    \end{aligned}
\end{equation}

We model the downlink process similarly to the uplink process, and therefore, we define the downlink sojourn time distribution at the AI-RAN operator $o_m$ as
\begin{equation} \label{eq:8}
    \begin{aligned}
        \mathbb{P}\!\left(T_{m,n}^{\mathrm{DL}} > t\right) = & \left(1 + \frac{P_{m,n}^{\mathrm{DL}} \mu_m^{\mathrm{DL}}}{r_{m,n}^{\mathrm{DL}} - \mu_m^{\mathrm{DL}}}\right)  e^{-\mu_m^{\mathrm{DL}} t} \\ 
        &- \frac{P_{m,n}^{\mathrm{DL}} \mu_m^{\mathrm{DL}}}{r_{m,n}^{\mathrm{DL}} - \mu_m^{\mathrm{DL}}}
        e^{- r_{m,n}^{\mathrm{DL}} t}.
    \end{aligned}
\end{equation}
Here, $\mu_{m}^{DL}$ is akin to the definition in \eqref{eq:2}, which is defined as $\mu_{m}^{DL} \triangleq  \frac{D_{m}^{DL}}{d_o}$. We utilize $D_{m}^{DL}$ to represent the downlink data transmission rate between $u_i$ and $o_m$ on one sub-channel. We set $c_{m}^{DL}$ as the number of sub-channels possessed by the AI-RAN operator $o_m$.

We assume the three stages of AI tasks processing on the AI-RAN operator $o_m$ are independent, and therefore define the latency agreement violation probability distribution for type-$n$ AI tasks at $o_m$ as
\begin{equation} \label{eq:9}
     \mathbb{P}\!\left(T_{m,n} > t\right) = \mathbb{P}\!\left( (T_{m,n}^{\mathrm{UL}} + T_{m,n}^{\mathrm{P}} + T_{m,n}^{\mathrm{DL}}) > t\right).
\end{equation}
The expression in \eqref{eq:9} corresponds to a hypoexponential tail probability. While exact, it leads to a non-convex and analytically cumbersome form, which makes it difficult to embed into the subsequent contract optimization and equilibrium analysis. Therefore, we apply the Chernoff bound \cite{hellman1970probability} to derive a convex upper-bound approximation of the latency agreement violation probability for type-$n$ AI tasks served by $o_m$, which is defined as
\begin{equation} \label{eq:10}
    \tilde{\mathbb{P}}\!\left(T_{m,n} > t\right) = e^{-\eta t}\prod_{s}{G_{m,n}^{s}(\eta)}.
\end{equation}
Here, we consider $s \in \{UL, P, DL\}$. We define the function $G_{m,n}^{s}(\eta)$ as
\begin{equation} \label{eq:11}
    G_{m,n}^s(\eta) = \left( {\left(1 - P_{m,n}^s\right) + P_{m,n}^s\frac{{r_{m,n}^s}}{{r_{m,n}^s - \eta}}} \right) \cdot \frac{{\mu _m^s}}{{\mu _m^s - \eta}},
\end{equation}
in which $\eta$ is defined as
\begin{equation} \label{eq:12}
    \eta = \zeta \cdot \min_s\{\mu_{m,n}^s-\frac{\lambda_{m,n}^{s}}{c_{m,n}^{s}}\},
\end{equation}
where $\zeta \in (0, 1)$.

For clarity, in the remainder of this paper, we leverage $\tilde{p}(t)$ to represent the equation \eqref{eq:10}. For instance, we utilize $\tilde{p}(L_{m,n})$ to denote the latency agreement violation probability of AI-RAN operator $o_m$ regarding the type-$n$ contract item.

\subsection{Utility Functions} \label{sec:4.2}

\subsubsection{AI-RAN Operator Utility}
In this paper, we consider the utility of the AI-RAN operator is composed of the payment received from matched AI users, the penalty regarding the latency service agreement violation, and the energy cost of running the AI model. Since we assume each AI-RAN operator $o_m$ designs a contract menu $\mathcal{C}_m = \{(L_{m,n},R_{m,n}) \mid n \in [N]\}$ for $N$ types of AI users, we design the utility function of AI-RAN operator $o_m$ as follows
\begin{equation} \label{eq:13}
    \begin{aligned}
        \pi _{op}^m({{\cal C}_m};{{\cal C}_{-m}}) = \sum_{n=1}^{N}{|{{\cal I}_n}|{a_{m,n}} \delta \left( R_{m,n} - \bar{C} \tilde{p}\left( {L_{m,n}} \right) \right)} - \varrho E_m.
    \end{aligned}
\end{equation}
Here, ${\cal C}_{-m}$ indicates the contract menus designed by other AI-RAN operators besides the AI-RAN operator $o_m$, which will affect the contract menu design of $o_m$. $\bar{C}$ captures the ai-ran operator-side expected economic loss caused by violating the posted latency agreement, including compensation cost, reputation loss, and operational overhead. $\varrho$ is the conversion coefficient between the energy cost and the US dollar. $E_m$ is the energy cost of $o_m$ in processing the AI tasks that matched AI users offloaded, which is defined as
\begin{equation} \label{eq:14}
    E_m =  \sum_{n=1}^{N}{|{{\cal I}_n}|{a_{m,n}} \delta} \epsilon_m \phi_m.
\end{equation}
$\epsilon_m$ represents the energy cost per Floating-point OPeration (FLOP) of the AI-RAN operator $o_m$. $\phi_m$ denotes the model complexity, which is inscribed in the number of FLOP used by $o_m$ for running the AI model.

\subsubsection{AI User Utility and Hidden Information}
We consider that the utility of the AI user is affected by the AI service quality, AI service latency, the AI model fee charges by the AI-RAN operator, and the potential penalty refund received from the AI-RAN operator \cite{liu2024deep, kang2019toward, zhan2025learning, li2021contract}. Therefore, we model the utility function of the type-$n$ AI user as
\begin{equation} \label{eq:15}
    \pi_u(L_{m,n},R_{m,n};\theta_n) = \alpha_x q_m - \beta_y L_{m,n} - R_{m,n} + \bar{R} \tilde{p}(L_{m,n}).
\end{equation}
Here, $\alpha_x$ denotes the sensitivity of the AI user $u_i$ regarding the AI service quality.
$\beta_y$ indicates the sensitivity of the AI user $u_i$ regarding the AI service latency. $\bar{R}$ is the penalty fee that AI-RAN operator $o_m$ pays to the AI user when the latency agreement is violated, which is a constant.

Notably, the parameters $\alpha_x$ and $\beta_y$ of AI users are private information to the AI-RAN operator, which might hinder the optimal contract menu derivation for the AI-RAN operator. To this end, we intend to utilize contract theory \cite{li2023book}, which possesses the truthful revealing property, to assist the AI-RAN operator in deriving the optimal contract menu under hidden information. We illustrate the connection between the AI user's type, $\theta_n$, and the private parameters $\alpha_x$ and $\beta_y$ of AI users in Section \ref{sec:5.1}.

\section{Problem Formulation} \label{sec:5}
In this section, we first present the contract menu design without competition in Section \ref{sec:5.1}. Next, we reformulate the contract menu design problem in Section \ref{sec:5.2}. Lastly, we illustrate the contract menu design problem under competition in Section \ref{sec:5.3}.

\subsection{Contract Design Under Asymmetric Information} \label{sec:5.1}

In this section, we formulate the contract design problem of the AI-RAN operator $o_m$ under asymmetric information, in which the competition among AI-RAN operators is omitted for clarity.

\subsubsection{Two-Dimensional Private Information and Contract Menu}
Observing \eqref{eq:15}, AI users are characterized by a private type $(\alpha_x,\beta_y)$. By referring to \cite{lim2021towards, xiong2020multi, zhan2025vision}, we assume the AI-RAN operator $o_m$ can categorize the private type of AI users into the set $\{\alpha_x \mid x \in [X]\}$ and the set $\{\beta_y \mid y \in [Y]\}$ via data mining technologies. We sort the private types as follows:
\begin{equation} \label{eq:16}
    \alpha_1 \le \alpha_2 \le \cdots \le \alpha_X,
    \qquad
    \beta_1 \ge \beta_2 \ge \cdots \ge \beta_Y.
\end{equation}
Therefore, we define the contract menu provided by the AI-RAN operator $o_m$ as
\begin{equation} \label{eq:17}
    \mathcal{C}_m = \{(L_{m,x,y}, R_{m,x,y}) \mid x\in [X],\, y\in [Y]\},
\end{equation}
where $(L_{m,x,y}, R_{m,x,y})$ denotes the latency agreement and the corresponding price designed for the type-$(\alpha_x,\beta_y)$ AI user. Different from the definition of the AI user's utility function in \eqref{eq:15}, we re-express the utility function of AI users in an explicit manner
\begin{equation} \label{eq:18}
    \begin{aligned}
        \pi_u(L_{m,x,y}, R_{m,x,y};\alpha_x, \beta_y)
        & =  \alpha_x q_m - \beta_y L_{m,x,y} \\
        &- R_{m,x,y} + \bar R \tilde p(L_{m,x,y}).
    \end{aligned}
\end{equation}

\subsubsection{Impact of Private Information on Contract Selection} To maximize the AI-RAN operator's utility under information asymmetry, the contract menu should satisfy the standard incentive compatibility (IC) and individual rationality (IR) constraints \cite{li2023book}.

\emph{Incentive compatibility} requires that each AI user maximizes its utility by selecting the contract item designed for its own type, i.e., the self-revealing principle, which is defined as
\begin{equation} \label{eq:19}
\begin{aligned}
    \pi_u(L_{m,x,y}, R_{m,x,y}; \alpha_x, \beta_y)
    \ge
    \pi_u(L_{m,x',y'}, R_{m,x',y'}; \alpha_x, \beta_y),
    \\
    \forall (x',y')\neq(x,y),\ \forall x,y.
\end{aligned}
\end{equation}
When $XY \times (XY-1)$ IC constraints \eqref{eq:19} are satisfied, all types of rational AI users will select the contract tailored for them, so as to maximize their utility.

\emph{Individual rationality} requires that each AI user obtains a non-negative utility by participating in the contract:
\begin{equation} \label{eq:20}
    \pi_u(L_{m,x,y}, R_{m,x,y}; \alpha_x, \beta_y) \ge 0, \quad \forall x,y.
\end{equation}
When $XY$ IR constraints \eqref{eq:20} are satisfied, all types of rationale AI users will have an incentive to select a contract and pay for the AI service provided by the AI-RAN operator.

\subsubsection{Reduction of the Contract Menu}
We assess how the two private parameters $(\alpha_x,\beta_y)$ affect AI
users’ contract selection behavior. Consider two arbitrary contract items
$L_{m,x,y}, R_{m,x,y}$ and $L_{m,x',y'}, R_{m,x',y'}$, the utility difference for the type-$(\alpha_x,\beta_y)$ AI user is defined as 
\begin{equation} \label{eq:21}
    \begin{aligned}
        &\pi_u(L_{m,x,y}, R_{m,x,y};\alpha_x,\beta_y) - \pi_u(L_{m,x',y'}, R_{m,x',y'};\alpha_x,\beta_y) \\
        &= -\beta_y(L_{m,x,y}-L_{m,x',y'}) - (R_{{m,x,y}}-R_{m,x',y'}) \\
        &+ \bar R\big( \tilde{p}(L_{{m,x,y}}) - \tilde{p}(L_{m,x',y'}) \big),
    \end{aligned}
\end{equation}
which is independent of $\alpha_x$. Therefore, although AI users possess two-dimensional private information, only $\beta_y$ plays a role in affecting the IC constraints. All users with the same latency sensitivity $\beta_y$ will select the same contract item, regardless of their quality sensitivity $\alpha_x$.

Observing IR constraints in \eqref{eq:20}, when we fix the private parameter $\beta_y$, the AI user's utility is increasing in $\alpha_x$. The IR constraint is most restrictive for the smallest quality sensitivity $\alpha_1$ when we fix $\beta_y$. Therefore, we can reduce the $XY$ IR constraints in \eqref{eq:20} to
\begin{equation} \label{eq:22}
    \pi_u(L_{m,1,y}, R_{m,1,y}; \alpha_1, \beta_y) \ge 0, \quad \forall y,
\end{equation}
while guaranteeing the IR constraints are satisfied for all types of AI users.

Upon the above observations, both IC and IR constraints of AI users are solely affected by the private parameter $\beta_y$ with fixed $\alpha_1$. Therefore, we can reduce the original two-dimensional contract menu to a one-dimensional menu indexed only by $\beta_y$:
\begin{equation} \label{eq:23}
    \mathcal{C}_m = \{(L_{m,y}, R_{m,y}) \mid y \in [Y] \}.
\end{equation}

For notational clarity and consistency with the system model defined in Section \ref{sec:3.2}, we utilize $\Theta = \{\theta_n \mid n \in [N]\}$ to represent the set of $N$ types of AI users in the market. Here, we consider $N = Y$ and $\theta_n \triangleq (\beta_y)$. We set the contract menu formulated by the AI-RAN operator $o_m$ as $\mathcal{C}_m=\{(L_{m,n}, R_{m,n}) \mid n \in [N]\}$.

We formulate the contract design problem of the AI-RAN operator $o_m$ as
\begin{subequations} \label{eq:24}
    \begin{align}
        \max_{{\cal C}_m} \quad & \sum_{n=1}^{N} |{\cal I}_n|\delta \left( R_{m,n} - \bar C \tilde{p}(L_{m,n}) - \varrho \epsilon_m \phi_m \right) \label{eq:24a}\\
        \text{s.t.}\quad
        &
        \pi_u(L_{m,n},R_{m,n};\theta_n) \geq 0, \quad \forall n, \label{eq:24b}
        \\
        &
        \pi_u(L_{m,n},R_{m,n};\theta_n) \geq  \pi_u(L_{m,n'},R_{m,n'};\theta_n), \notag
        \\ & \forall n' \neq n, \forall n. \label{eq:24c}
    \end{align}
\end{subequations}
Since we do not consider the competition among AI-RAN operators, the optimization objective function in \eqref{eq:24a} is different from the utility function of the AI-RAN operator defined in \eqref{eq:13}. Without competition, we assume all of the AI users request AI services from the AI-RAN operator $o_m$, i.e., $\{a_{m,n} = 1 \mid n \in [N]\}$. Equations \eqref{eq:24b} and \eqref{eq:24c} are IR and IC constraints of AI users, respectively. Regarding the private parameter $\alpha_x$, we only consider the case of $\alpha_1$.

\subsection{Contract Design Problem Reformulation} \label{sec:5.2}
Although the IC and IR constraints in \eqref{eq:24b} and \eqref{eq:24c} are convex, the number of constraints, $N^2$, grows quadratically with the number of types. Moreover, the coupled structure of the IC constraints further exacerbates the difficulty of analytically deriving the optimal contract. Similar to previous works \cite{zhan2025distributionally, lim2021towards, li2021contract}, we aim to reduce the original $N^2$ constraints to $N$ binding constraints by conducting IR and IC constraints reduction. To this end, we enable closed-form elimination of reward variables, $\{R_{m,n} \mid n \in [N]\}$.

\subsubsection{Constraints Reduction and Reward Variables Elimination} After the contract menu reduction, the optimization problem in \eqref{eq:24} is a single dimensional contract screening problem. By referring to the proof skeleton in \cite{zhan2025distributionally, lim2021towards, li2021contract, xiong2020multi}, we can deduce the following sufficient conditions for the feasible contract.
\begin{proposition} [Conditions for Contract Feasibility] \label{prop:1}
    For contract feasibility, we can reduce the IC and IR constraints in \eqref{eq:24b} and \eqref{eq:24c} as follows:
    \begin{enumerate}
        \item [1).] $L_{m,1} \le L_{m,2} \le \cdots \le L_{m,N}$,
        \item [2).] $\pi_u(L_{m,1},R_{m,1};\theta_1) \ge 0$,
        \item [3).] $\pi_u(L_{m,n},R_{m,n};\theta_n) \ge \pi_u(L_{m,n-1},R_{m,n-1};\theta_n), \; \forall n \in \{2,\cdots,N\}.$
        \item [4).] $\pi_u(L_{m,n},R_{m,n};\theta_n) \ge \pi_u(L_{m,n+1},R_{m,n+1};\theta_n), \; \forall n \in \{1,\cdots,N-1\}.$
    \end{enumerate}
\end{proposition}
The first monotonicity condition is the necessary condition for the feasible contract derivation. The second, third, and fourth conditions are the reduced IR and IC constraints, respectively.

With Proposition \ref{prop:1}, we can derive a closed-form representation of reward variables in terms of the latency agreement variables. Concretely, at the optimal solution to \eqref{eq:24}, the worst-type IR constraint and all adjacent IC constraints bind, which will have the impact of only increasing the utility of the AI-RAN operator. We can re-state the optimal reward variables $\{R_{m,n} \mid n \in [N]\}$ via Proposition \ref{prop:2}.

\begin{proposition}[Optimal Reward Variables]
\label{prop:2}
For a set of latency agreement variables $\mathbf{L}_m$ that satisfies $L_{m,1} \le L_{m,2} \le \cdots \le L_{m,N}$ in a feasible contract menu $\mathcal{C}_m$, the optimal reward variables is defined as
\begin{equation}\label{eq:25}
R_{m,n}=
\begin{cases}
\alpha_1 q_m-\beta_1 L_{m,1}+\bar R\,\tilde p(L_{m,1}), & n=1,\\[4pt]
R_{m,n-1}
-\beta_n\!\left(L_{m,n}-L_{m,n-1}\right)
 \\
 +\bar R\!\left(\tilde p(L_{m,n})-\tilde p(L_{m,n-1})\right), & n\in\{2,\dots,N\}.
\end{cases}
\end{equation}
\end{proposition}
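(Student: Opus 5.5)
Given any feasible latency vector $\mathbf{L}_m$ with $L_{m,1}\le\cdots\le L_{m,N}$, we treat the rewards as the only remaining decision variables. The objective \eqref{eq:24a} is strictly increasing in each $R_{m,n}$, and the $L_{m,n}$ are fixed. So the plan is to show that the largest feasible rewards are exactly those in \eqref{eq:25}, and that they remain feasible. We work with the reduced constraint set of Proposition~\ref{prop:1}. Each type's utility is affine in its own $R$, so define the \emph{information rent} $U_n \triangleq \pi_u(L_{m,n},R_{m,n};\theta_n)$. With $\alpha_1$ fixed as in Section~\ref{sec:5.1}, the downward adjacent IC (condition 3) reads
\begin{equation*}
U_n \ge U_{n-1} + (\beta_{n-1}-\beta_n)L_{m,n-1}.
\end{equation*}
This follows because $\pi_u(L_{m,n-1},R_{m,n-1};\theta_n)-\pi_u(L_{m,n-1},R_{m,n-1};\theta_{n-1}) = -(\beta_n-\beta_{n-1})L_{m,n-1}$, and the inequality is nonnegative by the ordering \eqref{eq:16}.

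The first step is to show that the IR constraint of type~1 binds. If $U_1>0$, lower every $U_n$ by the same $\varepsilon>0$, i.e.\ raise every $R_{m,n}$ by $\varepsilon$. All IC constraints are differences of utilities and are unchanged. Only IR of type~1 tightens, and IR of the other types is implied via condition 3. Operator profit strictly increases, so at the optimum $U_1=0$. Solving $U_1=0$ for $R_{m,1}$ gives the first line of \eqref{eq:25}.

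The second step is to show by induction that each downward adjacent IC binds. Suppose condition 3 is slack for some $n$. Raise $R_{m,k}$ by a small $\varepsilon$ for all $k\ge n$. This leaves every IC among $\{k\ge n\}$ and every IC among $\{k<n\}$ intact. The downward IC between $n$ and $n-1$ stays satisfied for small $\varepsilon$ by slackness. The upward IC from $n-1$ to $n$ (condition 4) only becomes looser. Profit rises, which is a contradiction. Setting condition 3 to equality and cancelling the common $\alpha_1 q_m$ term yields the recursion in the second line of \eqref{eq:25}.

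The last step, which I expect to be the main obstacle, is verifying that the rewards defined by \eqref{eq:25} satisfy the upward constraints (condition 4) and hence, by Proposition~\ref{prop:1}, full IC/IR. With condition 3 binding, condition 4 reduces to
\begin{equation*}
-\beta_n(L_{m,n+1}-L_{m,n}) \le -\beta_{n+1}(L_{m,n+1}-L_{m,n}),
\end{equation*}
where the $\bar R\tilde p$ terms cancel because they do not depend on type. This is equivalent to $(\beta_n-\beta_{n+1})(L_{m,n+1}-L_{m,n})\ge 0$. That holds from $\beta_n\ge\beta_{n+1}$ in \eqref{eq:16} together with the monotonicity hypothesis on $\mathbf{L}_m$. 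I would check the cancellation carefully here, since the penalty term $\bar R\tilde p(L)$ is nonlinear in $L$. The reason it poses no difficulty is that it enters every type's utility identically, so single-crossing reduces to the linear term $-\beta L$ alone. Finally, IR for $n\ge 2$ follows from $U_n\ge U_{n-1}\ge\cdots\ge U_1=0$.
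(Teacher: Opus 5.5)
Your proof is correct and takes the same approach as the paper. The paper only asserts, citing standard screening arguments, that at the optimum the type-1 IR constraint and the adjacent IC constraints bind, and reads off \eqref{eq:25}; your Steps~1--2 make exactly this rigorous. Your version is more careful than the paper's: you show that only the downward adjacent ICs bind, and you check that the upward ones (condition~4 of Proposition~\ref{prop:1}) then hold via $(\beta_n-\beta_{n+1})(L_{m,n+1}-L_{m,n})\ge 0$.
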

\subsubsection{Transformed Optimization Problem}
Upon Proposition~\ref{prop:2}, we can reduce the $2N$-variable problem \eqref{eq:24} to an $N$-variable optimization in $\mathbf{L}_{m} = \{L_{m,n} \mid n \in [N]\}$:
\begin{subequations} \label{eq:26}
    \begin{align}
        \max_{\mathbf{L}_m}
        \quad & \sum_{n=1}^{N} |{\cal I}_n| \delta \left( R_{m,n}(\mathbf{L}_m) -\bar C\,\tilde{p}(L_{m,n}) - \varrho \epsilon_m\phi_m \right) \label{eq:26a} \\
        \text{s.t.}\quad
        & L_{m,1} \le L_{m,2} \le \cdots \le L_{m,N}, \label{eq:26b}
    \end{align}
\end{subequations}
where $R_{m,n}(\mathbf{L}_m)$ is given by \eqref{eq:25}. We will utilize the problem \eqref{eq:26} as the basis for the subsequent competitive contract design.

\subsection{Contracts Design under Competition} \label{sec:5.3}
In this section, we extend the contract design to a competitive market with $M$ AI-RAN operators.

\subsubsection{Congestion-Dependent User Selection Equilibrium}
Under competition, each AI-RAN operator's utility is affected by the AI user's selection $\mathbf{A}$. Meanwhile, the latency agreement violation probability $\tilde{p}(\cdot)$ also affects both the AI user's utility and AI-RAN operator's utility, which depends on the AI task traffic load $\Lambda_m$ induced by $\mathbf{A}$. Here, we define $\Lambda_m$ as $\{\lambda_{m, n} \mid n \in [N]\}$, in which $\lambda_{m, n}$ is the same as the definition in \eqref{eq:3}.

Given a contract menu profile $\mathcal{S}_c=\{\mathcal{C}_m \mid m \in [M]\} = \{\mathcal{C}_m, \mathcal{C}_{-m}\}$, the utility of a type-$n$ AI user under operator $o_m$ is 
\begin{equation} \label{eq:27}
\begin{aligned}
    \pi_u(\mathcal{C}_m;\theta_n, \mathcal{C}_{-m}) = & \alpha_1 q_m
    -\beta_n L_{m,n} \\ 
    & -R_{m,n} +\bar R\,\tilde{p}(L_{m,n};\lambda_{m, n}(\mathbf{A})).
\end{aligned}
\end{equation}
Different from the canonical IC constraints \eqref{eq:19} that compare different types of AI users' utilities under fixed $\tilde{p}(\cdot)$, in \eqref{eq:27}, $\tilde{p}(\cdot)$ is induced by the AI users' selection matrix. For a given menu profile $\mathcal{S}_c$, we define the market outcome as a \emph{congestion-dependent selection equilibrium}.

\begin{definition}[Congestion-Dependent Selection Equilibrium]
\label{def:selection_equilibrium}
Given a menu profile $\mathcal{S}_c$, an AI user selection matrix $\mathbf{A}$ is a selection equilibrium if, for every type $n$, whenever $a_{m,n}=1$,
\begin{equation} \label{eq:28}
    \pi_u(\mathcal{C}_m;\theta_n, \mathcal{C}_{-m}) \ge \pi_u(\mathcal{C}_{m'};\theta_n, \mathcal{C}_{-m'}),\quad \forall m'\neq m,
\end{equation}
and $\pi_u(\mathcal{C}_m;\theta_n, \mathcal{C}_{-m})\ge 0$. We denote the equilibrium correspondence by $\mathbf{A}\in\Psi(\mathcal{S}_c)$.
\end{definition}

Notably, a larger market share for $o_m$ will increase $\lambda_{m, n}(\mathbf{A})$, which in turn increases $\tilde{p}(\cdot;\lambda_{m, n}(\mathbf{A}))$ and may reduce both the AI user utility and AI-RAN operator profit. To this end, the user selection matrix $\mathbf{A}$ is coupled with the optimal contract menu profile $\mathcal{S}_c$.

\subsubsection{Competitive Contract Design as an Equilibrium-Constrained Game}
By incorporating the congestion dependence, we re-state the utility function of operator $o_m$ under a contract menu profile $\mathcal{S}_c$ and an equilibrium allocation $\mathbf{A}\in\Psi(\mathcal{S}_c)$ as
\begin{equation} \label{eq:29}
    \begin{aligned}
        \pi_{op}^m({\mathcal C}_m;{\mathcal C}_{-m})
        &= \sum_{n=1}^{N} \lambda_{m, n}(\mathbf{A}) \Big(
        R_{m,n} - \\ 
        & \qquad \bar{C} \tilde{p}\left(L_{m,n};\lambda_{m, n}(\mathbf{A})\right) - \varrho \epsilon_m \phi_m \Big).
    \end{aligned}
\end{equation}.

Since $R_{m,n}$ can be expressed in closed form by \eqref{eq:25} as a function of $\mathbf{L}_m$, we parameterize the strategy of operator $o_m$ by $\mathbf{L}_m$ and write $\mathcal{C}_m=\mathcal{C}_m(\mathbf{L}_m)$.
Given competitors' strategies $\mathbf{L}_{-m}$, the best-response problem of operator $o_m$ can be written as an equilibrium-constrained program:
\begin{subequations} \label{eq:30}
    \begin{align}
        \max_{\mathbf{L}_m}\quad & \pi _{op}^m({{\cal C}_m}; {{\cal C}_{-m}}) \label{eq:30a} \\
        \text{s.t.}\quad
        & L_{m,1} \le L_{m,2} \le \cdots \le L_{m,N}, \label{eq:30b} \\
        & \mathbf{A}\in\Psi\!\big(\mathcal{C}_m(\mathbf{L}_m),\mathcal{C}_{-m}(\mathbf{L}_{-m})\big), \label{eq:30c} \\
        & a_{m,n}\in\{0,1\},\ \sum_{m=1}^M a_{m,n}\le 1,\ \forall n\in[N]. \label{eq:30d}
    \end{align}
\end{subequations}
Here, \eqref{eq:30b} is the monotonicity constraint, which is identical to \eqref{eq:26b}. Furthermore, constraint \eqref{eq:30c} ensures that the market share variables, $ \mathbf{A} $, are consistent with AI users’ utility-maximizing choices under congestion, rendering the contract design problem an equilibrium-constrained optimization. Notably, the correspondence $\Psi(\mathcal{S}_c)$ may admit multiple equilibrium allocations. Finally, constraint \eqref{eq:30d} defines the feasible space of AI users' contract selection.

Compared with \eqref{eq:26}, the key difference is that the effective market share, $\mathbf{A}$, should be consistent with users' equilibrium selections under congestion (Definition~\ref{def:selection_equilibrium}), and the violation probability for the type-$n$ AI task is evaluated at the induced load $\lambda_{m, n}(\mathbf{A})$.

The interactions among AI-RAN operators form a noncooperative game where each AI-RAN operator chooses $\mathbf{L}_m$ (equivalently $\mathcal{C}_m$) to maximize \eqref{eq:30a} subject to \eqref{eq:30b}--\eqref{eq:30d}. We define the competitive outcome as follows.

\begin{definition}[Competitive Contract Equilibrium] \label{def:competitive_equilibrium}
A strategy profile $\mathcal{L}^\star=\{ \mathbf{L}_m^\star \mid m \in [M] \}$ is a competitive contract equilibrium if for every AI-RAN operator $o_m$, $\mathbf{L}_m^\star$ solves \eqref{eq:30} given $\mathbf{L}_{-m}^\star$, and the corresponding allocation $\mathbf{A}^\star$ satisfies $\mathbf{A}^\star\in\Psi\!\big(\mathcal{C}_m(\mathbf{L}_m^\star), \mathcal{C}_{-m}(\mathbf{L}_{-m}^\star)\big)$. The equilibrium contract menu of operator $o_m$ is then $\mathcal{C}_m^\star=\mathcal{C}_m(\mathbf{L}_m^\star)$.
\end{definition}

\section{Proposed Algorithm} \label{sec:6}
In this section, we analyze the problem in Section \ref{sec:6.1}. Next, we present the proposed algorithm in Section \ref{sec:6.2}. Finally, we discuss the algorithm complexity in Section \ref{sec:6.3}.

\subsection{Problem Analysis} \label{sec:6.1}
Observing problem \eqref{eq:30}, the contract menu derivation of AI-RAN operators is coupled with the matching decisions of AI users via the congestion-dependent latency agreement violation probability $\tilde{p}(\cdot;\lambda_{m,n})$. We present the coupling structure of problem \eqref{eq:30} in Fig.~\ref{fig:problem_structure}. If we directly solve \eqref{eq:30} with a binary selection matrix $\mathbf{A}$, the market outcome is discrete and non-smooth, which poses a significant challenge for us to reach the equilibrium point. Specifically, a small variation in one AI-RAN operator's contract menu may trigger a re-selection of one AI user type. The AI users' utility will change accordingly since $\lambda_{m,n}$ and the corresponding latency agreement violation probability $\tilde{p}$ are altered along with $\mathbf{A}$. To this end, AI users need to re-select the AI-RAN operator, and AI-RAN operators need to re-design the contract menu.

\begin{figure}
    \centering
    \includegraphics[width=0.98\linewidth]{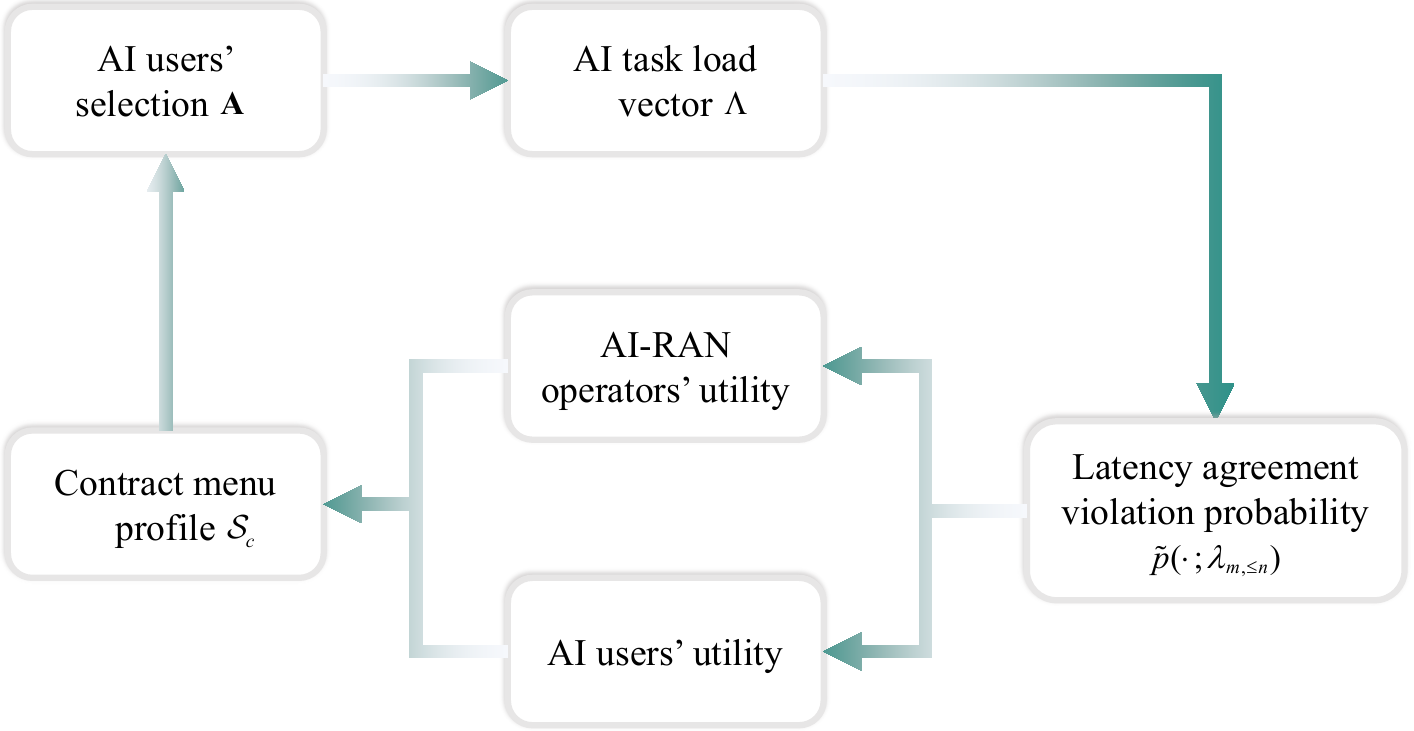}
    \caption{The illustration of the coupling structure of the problem \eqref{eq:30}.}
    \label{fig:problem_structure}
\end{figure}

To address this challenge, we propose a mixed stable matching with contracts algorithm, the design principle is inspired by \cite{hatfield2005matching, kelley1995iterative}. Our proposed algorithm is executed in the manner of the fixed-point iteration algorithm. Firstly, at the beginning of each iteration, each type-$n$ AI user is allowed to probabilistically select different AI-RAN operators, i.e., making the mixed selection strategy. Subsequently, we utilize the mixed matching profile to stabilize the AI-RAN operator-side contract update. Meanwhile, we introduce a shadow price to represent the AI user utility function that penalizes overloaded AI-RAN operators during the user-side selection step \cite{kelly1998rate}. The iteration of the fixed-point algorithm is running between the user-side and the AI-RAN operator side. Notably, after the iteration converges, we project the mixed matching profile to a deterministic matching outcome.

We can implement our proposed algorithm in a distributed manner. Specifically, each AI-RAN operator only needs to publicly announce contract menus and the congestion state. The internal optimization details and private cost parameters of AI-RAN operators remain hidden during the algorithm iteration.

\subsection{Equilibrium Contract Menu Profile Derivation} \label{sec:6.2}
Prior to the algorithm iteration running, we initialize the contract menu of each AI-RAN operator using the single-operator contract design in Section \ref{sec:5.2}. Subsequently, we iteratively update the mixed matching profile, shadow prices, and contract menus. We use $k$ to denote the iteration index.      

\subsubsection{Mixed User-Side Matching Update}
To ensure that the mixed matching profile is capacity-aware, we first define the effective service capacity of the AI-RAN operator $o_m$ as
\begin{equation} \label{eq:31}
    \bar{\lambda}_m \triangleq \chi \cdot \min \left\{ c_m^{\mathrm{UL}} \mu_m^{\mathrm{UL}}, c_m^{\mathrm{P}} \mu_m^{\mathrm{P}}, c_m^{\mathrm{DL}} \mu_m^{\mathrm{DL}} \right\},
\end{equation}
where $\chi \in (0,1]$ is a safety coefficient. Equation \eqref{eq:31} indicates that the effective service capacity of $o_m$ is determined by the bottleneck stage among uplink, processing, and downlink.

At iteration $k$, given the current contract menu profile $\mathcal{S}_c^{(k)}$ and the current congestion vector $\Lambda^{(k)}$, we define the shadow-price-adjusted utility of the type-$n$ AI user when selecting $o_m$ as
\begin{equation} \label{eq:32}
    \hat{\pi}_{u,m,n}^{(k)} =
    \pi_u(\mathcal{C}_m^{(k)};\theta_n,\mathcal{C}_{-m}^{(k)})
    - \omega_m^{(k)} \frac{|\mathcal{I}_n|\delta}{\bar{\lambda}_m},
\end{equation}
where $\omega_m^{(k)} \ge 0$ is a shadow price introduced coefficient, which is used to penalize overloaded AI-RAN operators during matching.

As per \eqref{eq:32}, we define the mixed matching probability of the type-$n$ AI user regarding the AI-RAN operator $o_m$ as
\begin{equation} \label{eq:33}
    z_{m,n}^{(k)} =
    \frac{
        \exp\left( \hat{\pi}_{u,m,n}^{(k-1)} / \tau^{(k)} \right)
    }{
        \exp\left( u_0 / \tau^{(k)} \right) +
        \sum_{m'=1}^{M}\exp\left( \hat{\pi}_{u,m',n}^{(k-1)} / \tau^{(k)} \right)
    },
\end{equation}
where $\tau^{(k)} > 0$ is the temperature parameter and $u_0$ denotes the opt-out utility. Notably, to ensure clarity, we use $\mathbf{Z}$ to denote the mixed AI user selection matrix, which is different from the discrete form AI user selection matrix $\mathbf{A}$. We use $z_{0,n}^{(k)}$ to denote the probability that the type-$n$ AI user opts out. As $\tau^{(k)}$ gradually decreases according to an annealing schedule \cite{kirkpatrick1983optimization}, the mixed matching probability in \eqref{eq:33} approaches a near-deterministic selection rule.

Upon the mixed matching matrix $\mathbf{Z}^{(k+1)}=\{Z_{m,n}^{(k)}\}$, we compute the cumulative traffic load of AI-RAN operator $o_m$ for the type-$n$ AI user as
\begin{equation} \label{eq:34}
    \lambda_{m,n}^{(k)} = \sum_{j=1}^{n} |\mathcal{I}_j| z_{m,j}^{(k)} \delta.
\end{equation}
To further stabilize the matching update, we use a damping step and define the final mixed matching profile of iteration $k$ as
\begin{equation} \label{eq:35}
    \mathbf{Z}^{(k)} \leftarrow (1-\vartheta)\mathbf{X}^{(k-1)} + \vartheta \hat{\mathbf{X}}^{(k)},
\end{equation}
where $\vartheta \in (0,1]$ is the matching damping coefficient. Subsequently, we update the shadow price coefficient of each AI-RAN operator as
\begin{equation} \label{eq:36}
    \omega_m^{(k+1)} =
    \left[
        \omega_m^{(k)}
        + \upsilon \frac{
            \sum_{n=1}^{N} |\mathcal{I}_n| z_{m,n}^{(k)} \delta - \bar{\lambda}_m
        }{\bar{\lambda}_m}
    \right]_+,
\end{equation}
where $\upsilon > 0$ is the shadow-price step size. Eq. \eqref{eq:36} implies that the shadow price will increase only when the aggregate demand allocated to $o_m$ exceeds its effective service capacity.

\begin{algorithm}[!t]
\DontPrintSemicolon
\SetKwInOut{Input}{Input}\SetKwInOut{Output}{Output}
\SetKwFunction{MixedResp}{MixedResp}
\SetKwFunction{Iron}{Iron}
\SetKwFunction{Recover}{Recover}

\Input{Initial values of $\vartheta$, $\upsilon$, $\tau^{(0)}$, and $\tau^{(K)}$; maximum iteration number $K$.}
\Output{Equilibrium contract menu profile $\mathcal{S}_c^\star$ and mixed matching profile $\mathbf{Z}^\star$.}

Initialize $\mathcal{S}_c^{(0)}$ via the no-competition contract design\;
Initialize $\mathbf{Z}^{(0)}$ and $\boldsymbol{\omega}^{(0)}$\;

\For{$k \leftarrow 1$ \KwTo $K$}{
    Derive the temperature $\tau^{(k)}$ according to the annealing schedule\;
    Derive the cumulative congestion vector $\Lambda^{(k)}$ via \eqref{eq:34} using $\mathbf{Z}^{(k-1)}$\;

    \ForEach{$m \in [M]$ \textbf{in parallel}}{
        Solve the relaxed problem of \eqref{eq:38} and obtain $\mathbf{L}_{m}^{(k),\mathrm{relax}}$\;
        $\mathbf{L}_{m}^{(k)} \leftarrow \Iron\!\left(\mathbf{L}_{m}^{(k),\mathrm{relax}}\right)$\;
        $\mathbf{R}_{m}^{(k)} \leftarrow \Recover\!\left(\mathbf{L}_{m}^{(k)}\right)$\;
        Update $\mathcal{C}_m^{(k)}=\{(L_{m,n}^{(k)},R_{m,n}^{(k)}) \mid n\in[N]\}$\;
    }

    Use \eqref{eq:32} and \eqref{eq:33} to derive the mixed response
    $\hat{\mathbf{Z}}^{(k)} \leftarrow \MixedResp(\mathcal{S}_c^{(k)},\Lambda^{(k)},\boldsymbol{\omega}^{(k)},\tau^{(k)})$\;

    Update the mixed matching profile via \eqref{eq:35}:
    $\mathbf{Z}^{(k)} \leftarrow (1-\vartheta)\mathbf{Z}^{(k-1)} + \vartheta \hat{\mathbf{Z}}^{(k)}$\;

    Update the shadow price vector via \eqref{eq:36}\;

    \If{convergence criterion is satisfied}{
        break\;
    }
}

Set $\mathbf{Z}^\star \leftarrow \mathbf{Z}^{(k)}$\;
Compute $\Lambda^\star$ from $\mathbf{Z}^\star$ via \eqref{eq:34}\;
For each $m\in[M]$, recompute the final contract menu $\mathcal{C}_m^\star$ under $\Lambda^\star$\;
Set $\mathcal{S}_c^\star=\{\mathcal{C}_m^\star \mid m\in[M]\}$\;
\Return{$\mathcal{S}_c^\star=\{\mathcal{C}_m^\star \mid m\in[M]\}$ and $\mathbf{Z}^\star$}\;

\caption{Mixed Stable Matching with Contracts}
\label{alg:1}
\end{algorithm}

\subsubsection{Operator-Side Contract Menu Update}
Upon the induced congestion vector $\Lambda^{(k)}=\{\lambda_{m,n}^{(k)}\}$, each AI-RAN operator independently updates its own contract menu. To align with the mixed matching profile, we define the effective demand mass of the type-$n$ AI user for AI-RAN operator $o_m$ as
\begin{equation} \label{eq:37}
    \hat{d}_{m,n}^{(k)} \triangleq |\mathcal{I}_n|\delta \left( (1-\rho)x_{m,n}^{(k)} + \rho \right),
\end{equation}
where $\rho \in (0,1)$ is a demand-floor coefficient. The parameter $\rho$ ensures that each contract item retains a small prior demand weight during the menu redesign step, which improves the numerical stability of the contract optimization.

Given $\hat{d}_{m,n}^{(k)}$ and $\Lambda^{(k)}$, AI-RAN operator $o_m$ solves
\begin{subequations} \label{eq:38}
    \begin{align}
        \max_{\mathbf{L}_m^{(k)}} \quad
        & \sum_{n=1}^{N} \hat{d}_{m,n}^{(k)}
        \Big(
        R_{m,n}(\mathbf{L}_m^{(k)})
        - \bar{C}\tilde{p}(L_{m,n}^{(k)};\lambda_{m,n}^{(k)})
        \Big) \label{eq:38a} \\
        \text{s.t.}\quad
        & L_{m,1}^{(k)} \le L_{m,2}^{(k)} \le \cdots \le L_{m,N}^{(k)}. \label{eq:38b}
    \end{align}
\end{subequations}

The problem \eqref{eq:38} is a single-dimensional contract screening problem. We solve it numerically via the following three steps:
\begin{enumerate}
    \item \textbf{Relaxed latency solve:} we first omit the monotonicity constraint \eqref{eq:38b} and solve a relaxed nonlinear but convex optimization problem to obtain $\mathbf{L}_{m}^{(k),\mathrm{relax}}$.
    \item \textbf{Iterative ironing via range re-solve:} if $\mathbf{L}_{m}^{(k),\mathrm{relax}}$ violates monotonicity, we identify the violating block and re-solve the optimization problem only on this block while fixing the non-violating entries. This procedure is repeated until a monotone latency vector is derived \cite{xiong2020multi}.
    \item \textbf{Reward recovery:} after obtaining the monotone latency vector $\mathbf{L}_{m}^{(k)}$, we recover the corresponding price vector $\mathbf{R}_{m}^{(k)}$ via Proposition~\ref{prop:2}.
\end{enumerate}
% Since the contract menu update of each AI-RAN operator only depends on its local congestion state and the publicly announced market outcome, all AI-RAN operators can perform this step in parallel.

\subsection{Algorithm Analysis} \label{sec:6.3}
We present the pseudocode of the proposed mixed stable matching with contracts algorithm in Algorithm~\ref{alg:1}. Before analyzing its computational complexity, we show that the mixed equilibrium point targeted by Algorithm~\ref{alg:1} exists.

\begin{theorem}[Existence of Mixed Equilibrium]
\label{thm:mixed_equilibrium}
Suppose that each latency agreement is selected from a bounded operational interval and the monotonicity constraint in \eqref{eq:38b} is imposed. Suppose further that the mixed matching profile is restricted to the queue-stable region induced by \eqref{eq:31}, and that the operator-side relaxed objective in \eqref{eq:38a} is quasi-concave in each operator's own latency vector. Then the game induced by Algorithm~\ref{alg:1} admits at least one mixed Nash equilibrium.
\end{theorem}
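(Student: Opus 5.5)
We will prove the statement by a fixed-point argument on the joint strategy profile, treating the user side as an additional "player" whose response is the logit map \eqref{eq:33}. Concretely, we define the strategy set of operator $o_m$ as
\[
\mathcal{L}_m=\{\mathbf{L}_m\in[L^{\min},L^{\max}]^N : L_{m,1}\le\cdots\le L_{m,N}\},
\]
which is nonempty, compact, and convex (an intersection of a box with a polyhedral cone). We define the user-side set $\mathcal{Z}$ as the collection of matrices $\mathbf{Z}$ whose columns lie in the probability simplex over $\{0,1,\dots,M\}$ (the index $0$ being the opt-out) and satisfy the queue-stability restriction $\sum_n|\mathcal{I}_n|z_{m,n}\delta\le\bar{\lambda}_m$ from \eqref{eq:31}. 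The set $\mathcal{Z}$ is again compact and convex, being a product of simplices cut by linear inequalities; we will need to argue that it is nonempty, e.g.\ by noting that the all-opt-out profile belongs to it. The product $\mathcal{X}=\prod_m\mathcal{L}_m\times\mathcal{Z}$ is then a nonempty, compact, convex subset of a Euclidean space.

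We next establish continuity of all payoffs on $\mathcal{X}$. On the stable region we have $r^s_{m,n}>0$, the Erlang--C probabilities $P^s_{m,n}$ are continuous in the load, and $\eta$ in \eqref{eq:12} stays strictly below each $\mu^s_m$ and $r^s_{m,n}$. Hence $G^s_{m,n}(\eta)$ in \eqref{eq:11} and $\tilde p$ in \eqref{eq:10} are continuous in $(L_{m,n},\lambda_{m,n})$, and $\lambda_{m,n}$ is linear in $\mathbf{Z}$ by \eqref{eq:34}. The rewards $R_{m,n}(\mathbf{L}_m)$ from Proposition~\ref{prop:2} are finite sums of continuous terms. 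Consequently the operator objective \eqref{eq:38a} with weights \eqref{eq:37}, and the shadow-adjusted user utility \eqref{eq:32}, are jointly continuous; for a fixed shadow-price vector $\boldsymbol{\omega}$, the logit response \eqref{eq:33} is therefore a continuous map of $(\mathcal{S}_c,\mathbf{Z})$.

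With continuity in hand, we define the best-response correspondence
\[
\Phi(\mathcal{L},\mathbf{Z})=\prod_m \arg\max_{\mathbf{L}'_m\in\mathcal{L}_m}\pi^m(\mathbf{L}'_m;\mathbf{Z})\times\{\Pi_{\mathcal{Z}}(\mathrm{Logit}(\mathcal{L},\mathbf{Z}))\},
\]
where $\Pi_{\mathcal{Z}}$ is the Euclidean projection onto $\mathcal{Z}$. The projection keeps the user response inside the stable region and is continuous; this is where the restriction in the hypothesis is used. By Berge's maximum theorem, each operator's argmax correspondence is nonempty, compact-valued, and upper hemicontinuous. Quasi-concavity of \eqref{eq:38a} in $\mathbf{L}_m$ makes the argmax convex-valued. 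The user component is a continuous single-valued map, hence a convex-valued u.h.c.\ correspondence. Kakutani's theorem then yields a fixed point $(\mathcal{L}^\star,\mathbf{Z}^\star)$, at which every operator best-responds and the mixed profile is the (projected) logit response. This is exactly a mixed Nash equilibrium of the game induced by Algorithm~\ref{alg:1}; the damping step \eqref{eq:35} leaves fixed points unchanged, so it does not affect the conclusion.

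The main obstacle is the treatment of the shadow prices $\boldsymbol{\omega}$ and the stability boundary. If $\boldsymbol{\omega}$ is included as a variable, it ranges over an unbounded set. We would first try to fix $\boldsymbol{\omega}$ as a parameter. If the update \eqref{eq:36} must also be stationary, we would instead truncate $\omega_m\in[0,\bar\omega]$ and add the map $\omega\mapsto[\omega+\upsilon(\cdot)]_+$, clipped to $[0,\bar\omega]$, as a continuous component. A second, related difficulty is that $\tilde p$ blows up as the load approaches capacity. Since $\chi<1$, or the projection keeps the load at most $\bar\lambda_m<\min_s c^s_m\mu^s_m$, we obtain a uniform margin $r^s_{m,n}\ge(1-\chi)\min_s c^s_m\mu^s_m>0$. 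We will verify that the case $\chi=1$ can be excluded, or handled by shrinking the region slightly, because continuity of $\tilde p$ on a compact set depends on this margin.
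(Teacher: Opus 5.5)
Your Kakutani argument is internally sound, but the point it produces is not shown to be a mixed Nash equilibrium, and in general it is not one. The proof breaks at the sentence ``this is exactly a mixed Nash equilibrium.''

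On the user side you never use a best response. The component $\Pi_{\mathcal{Z}}(\mathrm{Logit}(\cdot))$ is the logit map \eqref{eq:33} at a fixed positive temperature, followed by a projection. The logit map puts strictly positive probability on every operator and on opt-out. In the game where type $n$ evaluates $\mathbf{z}_n$ by its expected shadow-adjusted utility $z_{0,n}u_0+\sum_m z_{m,n}\hat{\pi}_{u,m,n}$, equilibrium requires $z_{m,n}>0$ only on utility-maximizing options. Your fixed point is therefore a Nash equilibrium only in the degenerate case where all $M+1$ options tie for every type; in general it is a logit quantal-response point.

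Without the projection you could at least call it a Nash equilibrium of an entropy-regularized game. With it, you cannot: the Euclidean projection onto the coupled set $\mathcal{Z}$ mixes the components of different types and solves no single type's optimization. It is also not a step of Algorithm~\ref{alg:1}, so the point is not even a stationary point of the algorithm unless the logit response already lies in $\mathcal{Z}$.

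The paper's proof avoids this by making each user type a player with strategy set $\Delta_n$ and a payoff that is linear, hence quasi-concave, in $\mathbf{z}_n$, and then applying Debreu--Fan--Glicksberg. You can repair your argument the same way. Replace the logit component by the correspondence $\arg\max_{\mathbf{z}_n\in\Delta_n}\big(z_{0,n}u_0+\sum_m z_{m,n}\hat{\pi}_{u,m,n}\big)$. It is nonempty, convex-valued (a face of the simplex) and upper hemicontinuous by Berge, so Kakutani then delivers a genuine mixed Nash equilibrium.

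If you want queue stability as a constraint rather than an assumption, impose it as a shared constraint in Debreu's generalized-game sense instead of projecting. Given the other types, the feasible set of type $n$ is the simplex with continuously varying upper bounds on the $z_{m,n}$, and it always contains the opt-out vertex. It is therefore a continuous, nonempty, convex-valued correspondence.

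Your attention to the margin $\chi<1$ and to bounding the shadow prices is legitimate. It is more careful than the paper, which takes the full simplices and simply asserts continuity ``under queue stability.'' These refinements do not, however, fix the logit-versus-best-response issue.

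Alternatively, drop the projection and take the temperature to zero along a sequence of your fixed points. Compactness and the closed graph of the operators' best responses then let you extract a Nash equilibrium in the limit.
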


\begin{proof}
Consider the continuous game whose players are the AI-RAN operators and the AI user types. The strategy set of operator $o_m$ is
\[
\mathcal{L}_m=\{\mathbf{L}_m:\underline L\le L_{m,1}\le \cdots \le L_{m,N}\le \overline L\},
\]
which is nonempty, compact, and convex. Here, $(\underline{L}, \overline{L})$ is the optimization domain of the contract menu. The strategy set of type-$n$ AI users is the probability simplex.
\[
\Delta_n=\{\mathbf{z}_n\in\mathbb{R}_+^{M+1}:\sum_{m=0}^{M}z_{m,n}=1\},
\]
which is also nonempty, compact, and convex; the component $z_{0,n}$ represents opting out. Under queue stability, the Chernoff approximation $\tilde p(\cdot;\lambda_{m,n})$ in \eqref{eq:10} is continuous in the latency agreement and the induced load. Since the reward recovery in \eqref{eq:25}, the user utility in \eqref{eq:32}, and the operator utility in \eqref{eq:38a} are compositions of continuous functions, all players' payoffs are continuous. The user-side expected utility is linear, hence quasi-concave, in $\mathbf{z}_n$, and the operator-side quasi-concavity follows from the stated assumption on \eqref{eq:38a}. Therefore, the Debreu--Fan--Glicksberg existence theorem applies and guarantees a Nash equilibrium of this continuous game \cite{glicksberg1952generalization}.
\end{proof}

In the remainder of this subsection, we analyze the computational complexity of our proposed algorithm. We use $K$ to denote the maximum number of mixed iterations.

\textbf{Mixed user-side update.} At each iteration, every type-$n$ AI user evaluates its shadow-price-adjusted utility over all $M$ AI-RAN operators, which complexity is $\mathcal{O}(M)$. Therefore, deriving the mixed response of all AI user types requires $\mathcal{O}(MN)$ operations. The subsequent load update, damping update, and shadow-price update are also linear in the number of $(m,n)$ pairs. Hence, the overall complexity of the mixed user-side update at each iteration is $\mathcal{O}(MN)$.

\textbf{Operator-side contract menu update.} At each iteration, each AI-RAN operator solves one relaxed contract design problem over $N$ types, followed by iterative ironing and reward recovery. We use $\mathsf{C}_{\mathrm{relax}}(N)$ to denote the computational complexity of solving the relaxed nonlinear problem. The iterative ironing step repeatedly resolves violating blocks and, in the worst case, requires quadratic complexity in the number of AI user types. The reward recovery step is linear in $N$. Therefore, the wall-clock complexity of the operator-side update is
\begin{equation} \label{eq:39}
    \mathcal{O}\!\left(\mathsf{C}_{\mathrm{relax}}(N) + N^2\right),
\end{equation}
since all AI-RAN operators can execute their menu updates in parallel.

\textbf{Overall Complexity.} In a nutshell, the total wall-clock complexity of our proposed algorithm is
\begin{equation} \label{eq:40}
    \begin{aligned}
        \mathcal{O}\!\left(KMN\right)
        + & \mathcal{O}\!\left(K\left(\mathsf{C}_{\mathrm{relax}}(N)+N^2\right)\right).
    \end{aligned}
\end{equation}

\section{Experimental Analysis} \label{sec:7}
In this section, we will evaluate our proposed method by varying several key parameters. Firstly, we introduce the experimental configurations, including parameter settings, benchmarks, and evaluation metrics, in Section \ref{sec:7.1}. Then, we do the comparison analysis against three benchmarks, including contract menu analysis and parameters analysis, in Section \ref{sec:7.2}.

\subsection{Configurations} \label{sec:7.1}
We categorize the parameters in this paper into three parts: the AI task model, the AI-RAN market model, and the proposed algorithm. First, for the AI task model, we refer to our previous works that utilize a diffusion-based AIGC model in a Unity-based teleoperation project \cite{zhan2025distributionally, zhan2025vision}. We set the average input size, computation burden, and output size of the AI task as $d_i=0.18$ Mb, $\tau=3.6\times 10^{11}$ FLOPs, and $d_o=0.27$ Mb, respectively. We set the AI task arrival rate of each AI user as $\delta=24$ per second. Second, for the AI-RAN market model, we consider $M=3$ heterogeneous AI-RAN operators and $N=8$ AI user types unless otherwise specified. We set the total number of AI users as $\sum_u=150$ by default. The latency-sensitivity vector is $\boldsymbol{\beta}=[8,7,6,5,4,3,2,1]\times 10^{-4}$, and the user composition is generated by a Dirichlet distribution with a default setting $\alpha=10$. The common service quality factor is $q_m=1.5$, which is calculated via metrics of PSNR (Peak Signal-to-Noise Ratio) and SSIM (Structural Similarity Index Measure). By referring to \cite{zhan2025learning}, we set the refund to AI users as $\bar{R}=1.2\times 10^{-4}$, the operator-side latency violation cost as $\bar{C}=1.2\times 10^{-3}$, and the model execution cost as $8\times 10^{-6}$. We assume three AI-RAN operators have heterogeneous radio and computing resources, and their effective AI service capacities are $(c_m^{UL},c_m^{P},c_m^{DL})=(48,24,194)$, $(43,16,172)$, and $(28,12,115)$, respectively. Remarkably, we refer to \cite{etsi2020nr38211, etsi2025nr38104} to set the heterogeneous settings of AI-RAN operators. Finally, for the proposed algorithm, the Chernoff parameter is set as $\zeta=0.9$, the demand-floor ratio is set as $0.05$, the matching damping coefficient is set as $0.35$, and the maximum number of mixed iterations is set as $50$.

\begin{figure*}[!t]
    \centering
    \begin{subfigure}{0.32\linewidth}
        \centering
        \includegraphics[width=\linewidth]{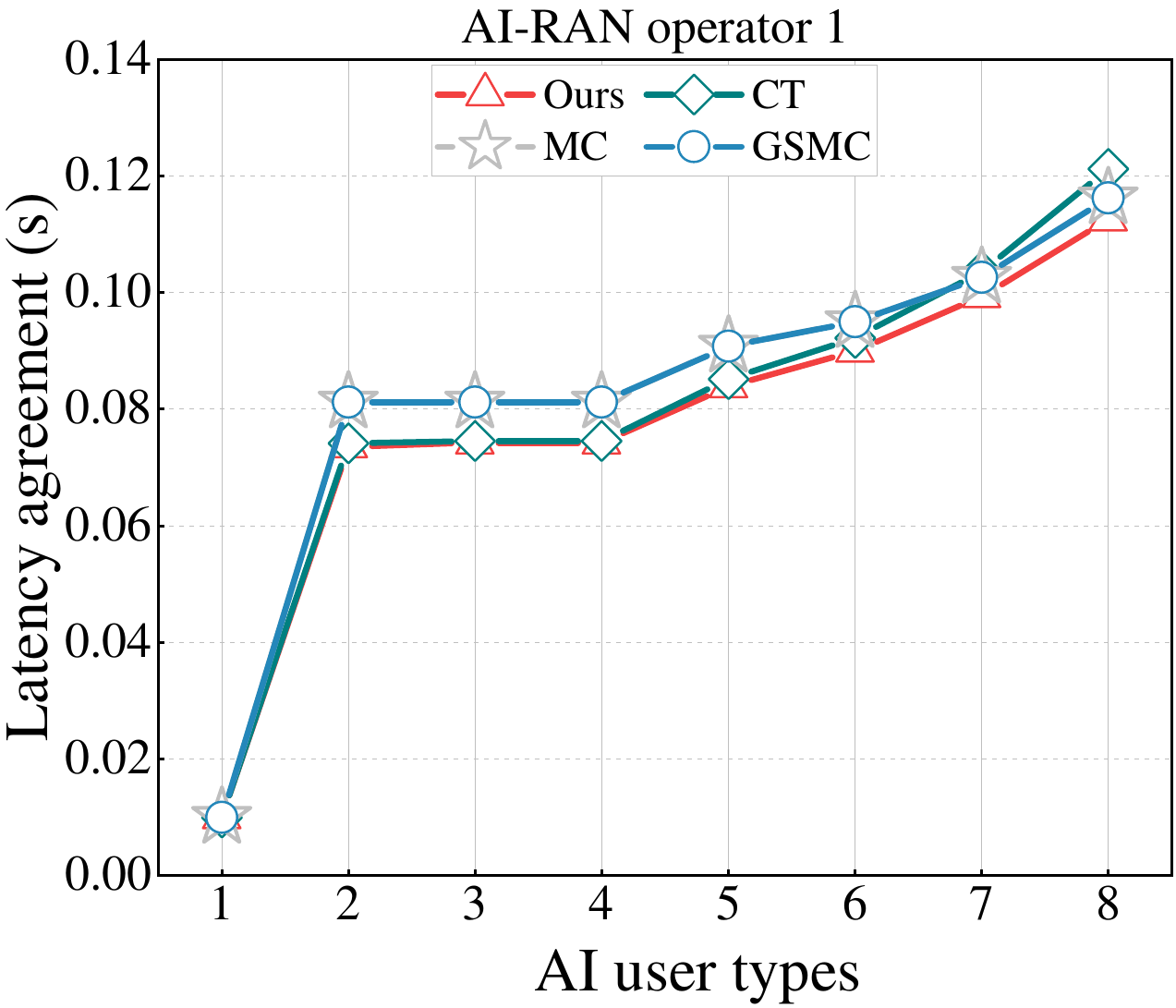}
        \caption{Latency agreements of the AI-RAN operator 1.}
        \label{fig3a}
    \end{subfigure}
    \hfill
    \begin{subfigure}{0.32\linewidth}
        \centering
        \includegraphics[width=\linewidth]{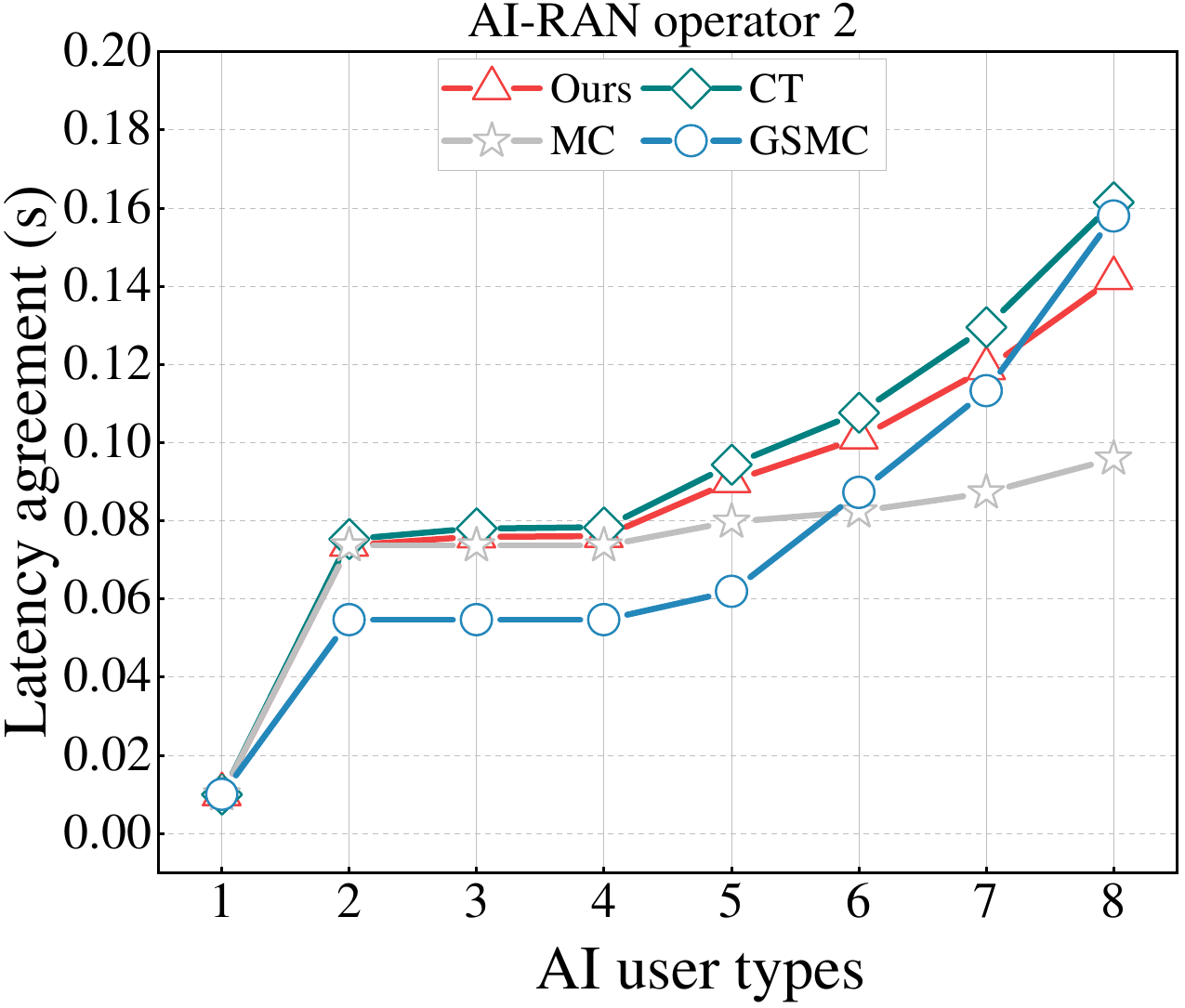}
        \caption{Latency agreements of the AI-RAN operator 2.}
        \label{fig3b}
    \end{subfigure}
    \hfill
    \begin{subfigure}{0.32\linewidth}
        \centering
        \includegraphics[width=\linewidth]{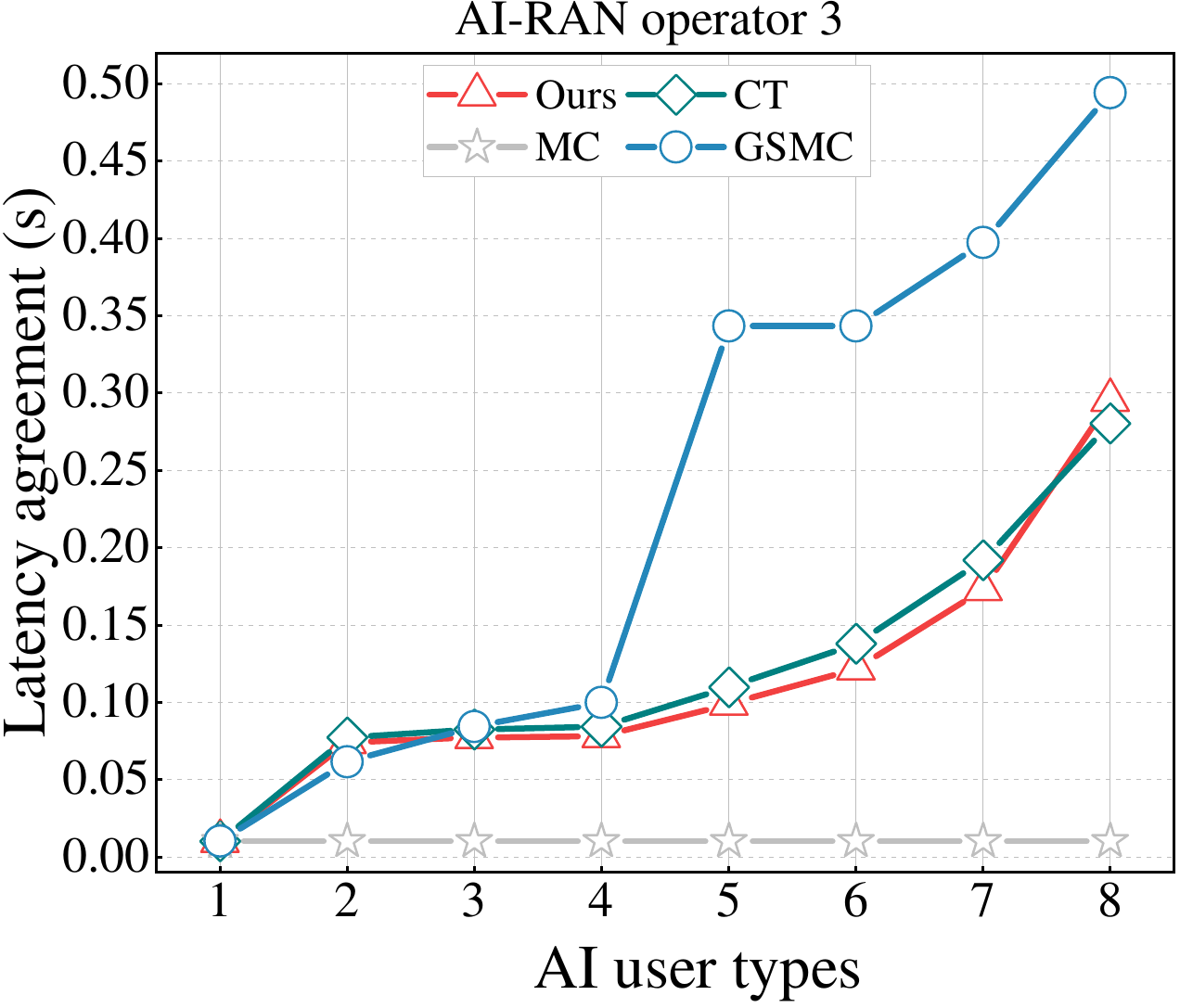}
        \caption{Latency agreements of the AI-RAN operator 3.}
        \label{fig3c}
    \end{subfigure}
    \caption{Equilibrium contract menus (latency agreements) of the three AI-RAN operators under the baseline setting.}
    \label{fig:3}
\end{figure*}

To assess the effectiveness of our proposed method, we compare it with the following benchmarks:
\begin{enumerate}
    \item [1)] Traditional Contract Theory (\textbf{CT}): Each AI-RAN operator independently derives its posted contract menu under a static default workload assumption, and AI users reselect AI-RAN operators under the posted contract menus.

    \item [2)] Static Matching-with-Contracts (\textbf{MC}): AI users first select operators under the original posted menus. Then, each AI-RAN operator redesigns its contract menu once under the induced matched workload.

    \item [3)] Gale-Shapely-based Matching-with-Contracts (\textbf{GSMC}): Gale-Shapley matching is conducted first, and the contract menu is redesigned after matching.
\end{enumerate}
To align with the baselines, we designate our proposed method as \textbf{Ours}. We use Python to conduct all simulations. We adopt two metrics to assess the performance of our proposed method, which are the total AI-RAN operator utility and the social welfare. The social welfare is defined as the sum of the total AI-RAN operator utility and the total user utility.
% Based on these metrics, we further study the effects of the market size $\sum_u\in\{30,60,90,120,150\}$, the number of AI user types $N\in\{4,6,8,10,12\}$, the refund coefficient $\bar{R}$, the operator-side violation cost $\bar{C}$, the Dirichlet parameter $\alpha\in\{0.1,1,10,100\}$, and the Chernoff parameter $\zeta\in\{0.7,0.75,0.8,0.85,0.9\}$.

\subsection{Experimental Results Analysis} \label{sec:7.2}
\subsubsection{Contract Menu Analysis}
We commence with analyzing the equilibrium contract menus under the default setting, and the results are depicted in Figs. \ref{fig3a}--\ref{fig3c}. Moreover, we present the matching results between AI-RAN operators and AI users under the default setting in Table \ref{table1}. Observing Fig. \ref{fig:3}, the contract menus of three AI-RAN operators derived by our proposed method and three benchmarks all satisfy the monotonicity constraint. Therefore, our proposed method and three benchmarks are feasible for deriving the contract menu in a competing AI-RAN service provision market. Moreover, since the AI-RAN operator 1 have the most available resource and the AI-RAN operator 3 has the least available resource, the contract menu derived by the AI-RAN operator 1 is most suitable for latency sensitive AI users.

\begin{table}[!t]
\caption{Matching results between AI user types and AI-RAN operators. Our proposed method is a mixed strategy, in which (0.25, 0.25, 0.25) means type 1 AI users have the same probability of 25\% select all of the AI-RAN operators. Notably, type 1 AI users also have 25\% probability select opt-out since all of three AI-RAN operators will extract all of the utility from type 1 AI users.}
\begin{center}
    \small
    \begin{tabular}{c|ccccc}
    \hline
    Type & Ours & CT & MC & GSMC \\ \hline
    1 & (0.25, 0.25, 0.25) & 1 & 1 & 1 \\
    2 & (0.256, 0.256, 0.256) & 1 & 1 & 2 \\
    3 & (0.29, 0.29, 0.29) & 3 & 1 & 1 \\
    4 & (0.306, 0.312, 0.317) & 3 & 1 & 1 \\
    5 & (0.315. 0.323. 0.335) & 3 & 1 & 3 \\
    6 & (0.294. 0.325. 0.368) & 3 & 1 & 3 \\
    7 & (0.252, 0.311, 0.432) & 3 & 1 & 3 \\
    8 & (0.164, 0.247, 0.589) & 3 & 1 & 3 \\ \hline
    \end{tabular}
\end{center}
\label{table1}
\vspace{-6mm}
\end{table}

By jointly observing Fig. \ref{fig:3} and Table \ref{table1}, we can analyze the rationale of the contract menu design of three AI-RAN operators. For our proposed method, the fist five AI user types almost have the same probability select three AI-RAN operators, which is because the contract menu provided by the three AI-RAN operators is similar. For the rest three AI user types, they have a higher probability; type 8 AI users have a probability of 58.9\%, to select the AI-RAN operator 3 since they are not sensitive to the latency. Analogously, for three benchmarks, the contract menu design is aligned with the mapping between AI user types and AI-RAN operators. For instance, regarding the benchmark MC, since all of the AI users are matched with AI-RAN operator 1, the contract menus derived by AI-RAN operators 2 and 3 do not vary across the eight AI user types.

\begin{figure}[!t]
    \centering
    \begin{subfigure}{0.49\linewidth}
        \centering
        \includegraphics[width=\linewidth]{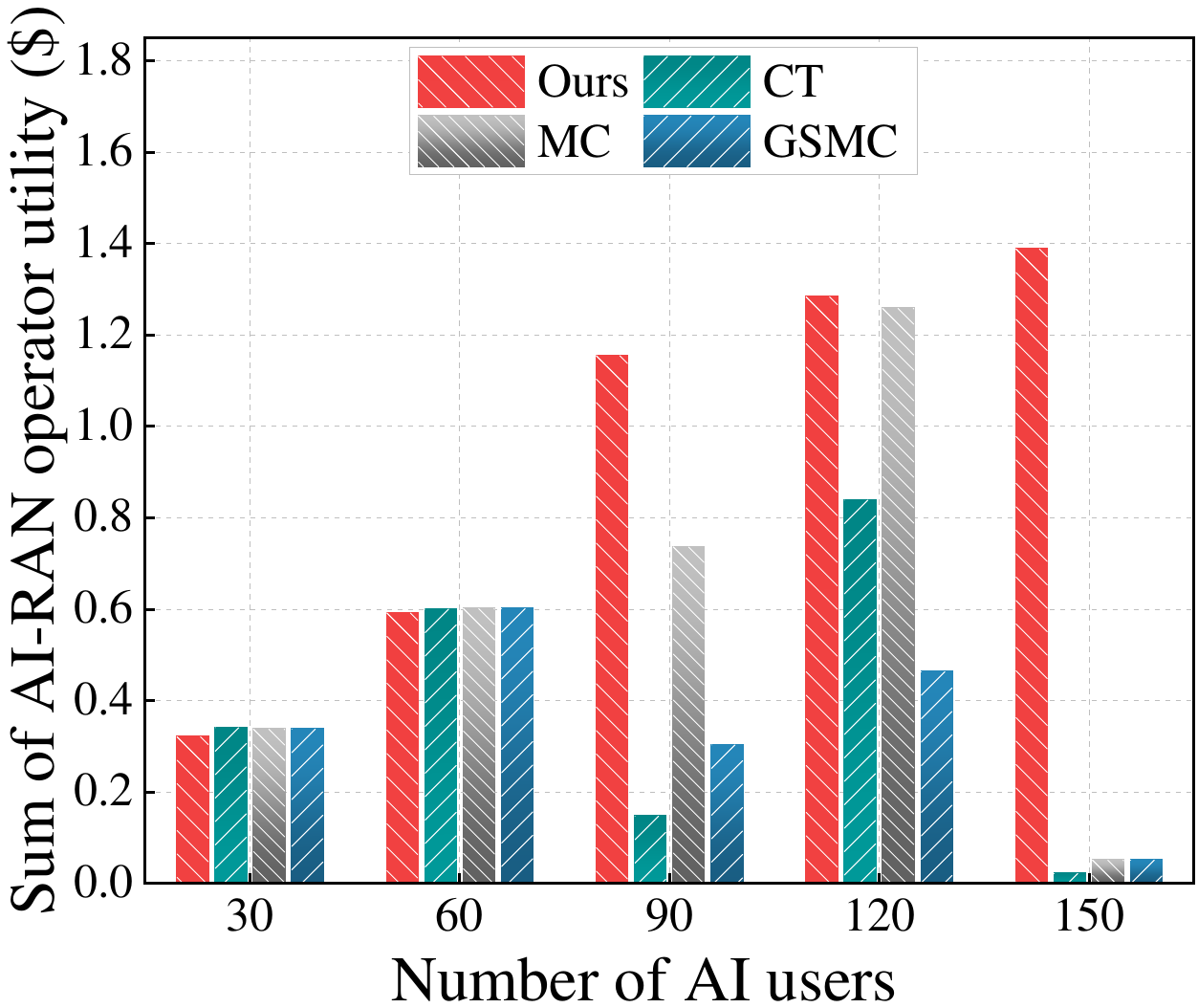}
        \caption{Total operator utility versus market size.}
        \label{fig4a}
    \end{subfigure}
    \hfill
    \begin{subfigure}{0.49\linewidth}
        \centering
        \includegraphics[width=\linewidth]{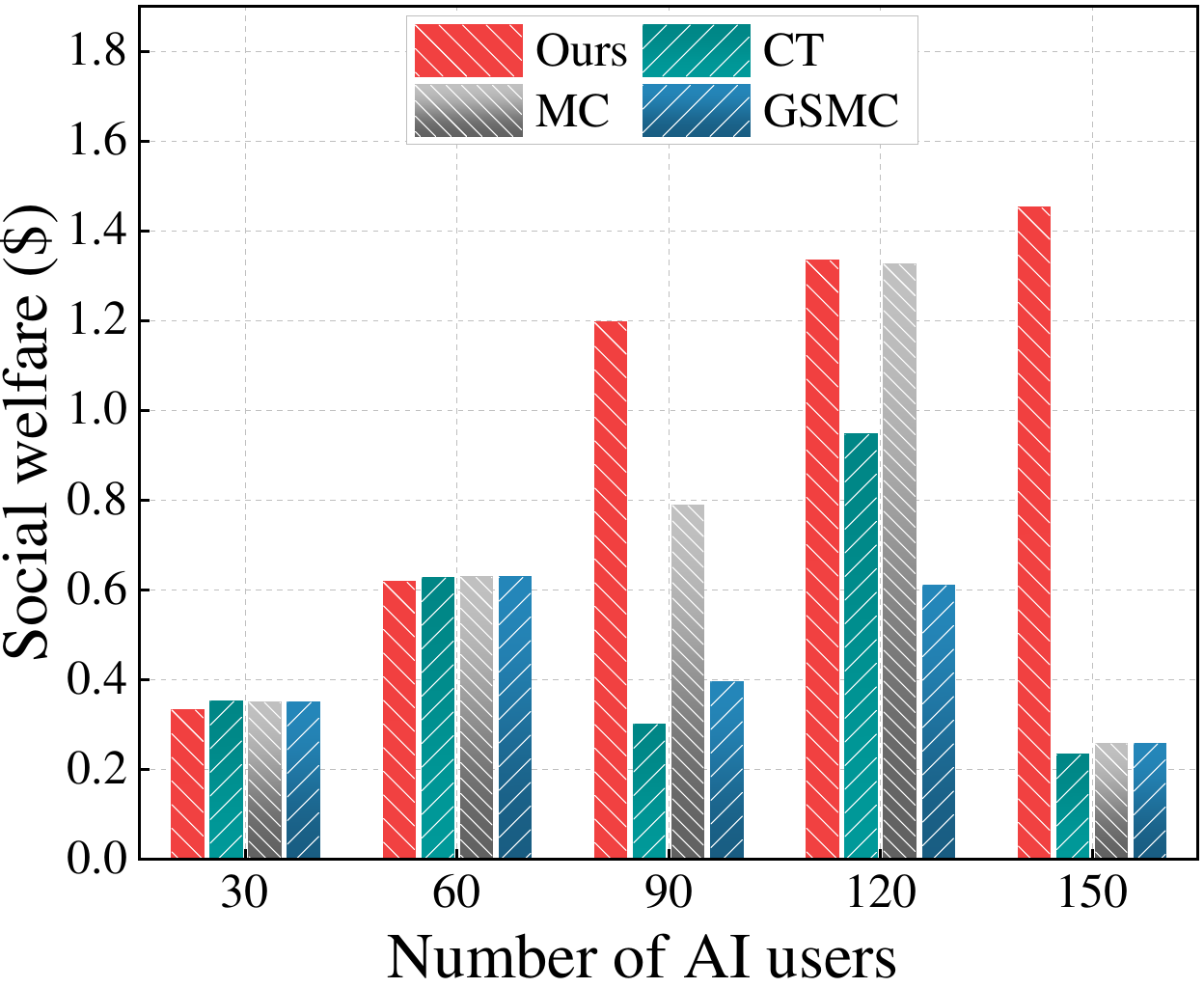}
        \caption{Social welfare versus market size.}
        \label{fig4b}
    \end{subfigure}

    \vspace{0.6em}

    \begin{subfigure}{0.49\linewidth}
        \centering
        \includegraphics[width=\linewidth]{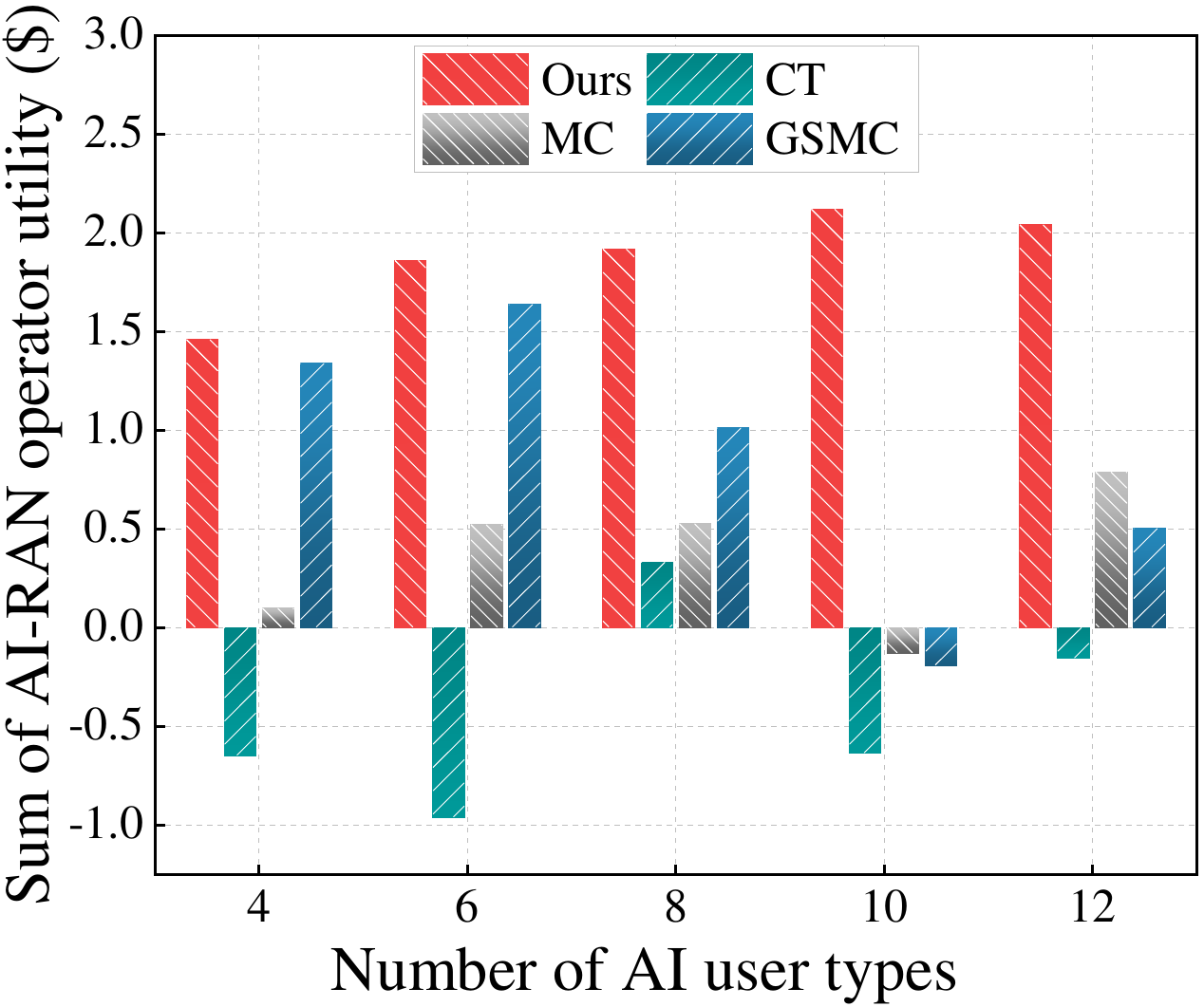}
        \caption{Total operator utility versus number of user types.}
        \label{fig4c}
    \end{subfigure}
    \hfill
    \begin{subfigure}{0.49\linewidth}
        \centering
        \includegraphics[width=\linewidth]{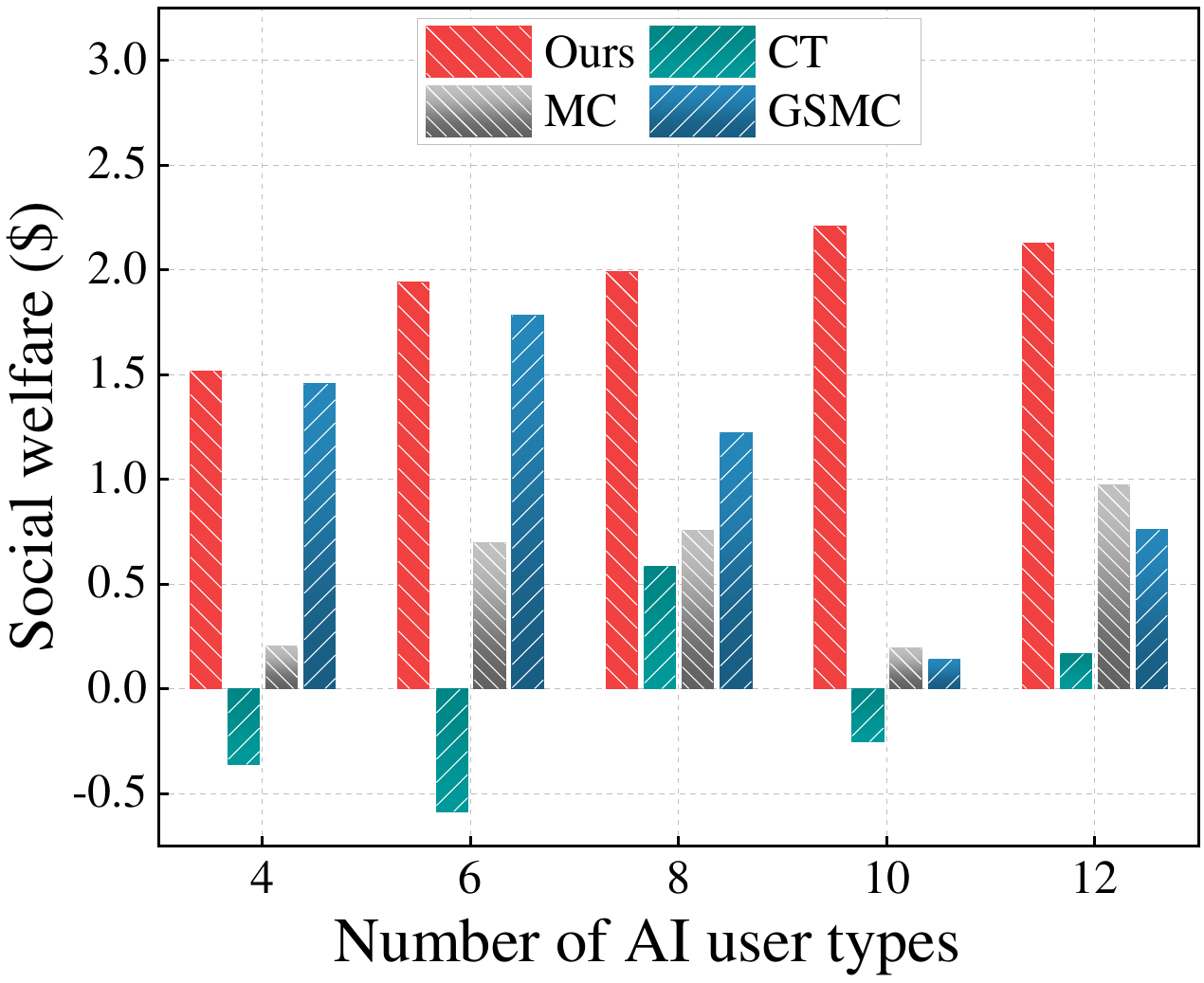}
        \caption{Social welfare versus number of user types.}
        \label{fig4d}
    \end{subfigure}
    \caption{Performance comparison under varying market size and varying number of AI user types.}
    \label{fig:4}
\end{figure}

\subsubsection{Impact of Market Size and User-Type Granularity}
In this section, we first study the impact of market size, and the results are presented in Figs. \ref{fig4a} and \ref{fig4b}. Figs.~\ref{fig4a} and \ref{fig4b} show the total AI-RAN operator utility and social welfare under different numbers of AI users. When the market size is small, congestion is weak, and one-shot benchmarks can still provide competitive outcomes. For example, when $\sum_u=30$, the proposed method achieves a total operator utility of about $0.322$ and a social welfare of about $0.334$, while the best benchmark, CT, achieves about $0.342$ and $0.354$, respectively. In this lightly loaded case, the proposed method is about $5.7\%$ lower in total operator utility and about $5.6\%$ lower in social welfare than CT. A similar pattern appears when $\sum_u=60$, where the best benchmark is only slightly higher than the proposed method.

The trend changes when the market becomes larger. As $\sum_u$ increases, congestion coupling becomes stronger, and the advantage of joint matching and contract redesign becomes clear. When $\sum_u = 90$, the proposed method improves the total operator utility and social welfare over the best benchmark by at least $56.8\%$ and $51.7\%$, respectively. When $\sum_u=120$, MC is close to the proposed method, but the proposed method still improves the total operator utility and social welfare by about $2.0\%$ and $0.6\%$, respectively. This result shows that the proposed method is especially useful when the induced congestion has a strong impact on service reliability.

Figs.~\ref{fig4c} and \ref{fig4d} show the impact of the number of AI user types. As $N$ increases, the market becomes more heterogeneous, and the screening problem becomes harder. For representative settings $N=8$ and $N=12$, the proposed method improves the total operator utility over the best benchmark by at least $89.6\%$, and improves the social welfare by at least $63.1\%$. When $N=10$, the best benchmark has a negative total operator utility of about $-0.129$, while the proposed method achieves a positive total operator utility of about $2.121$. Therefore, the proposed method provides a better balance between operator profit and user benefit when user heterogeneity becomes stronger.

\begin{figure}[!t]
    \centering
    \begin{subfigure}{0.49\linewidth}
        \centering
        \includegraphics[width=\linewidth]{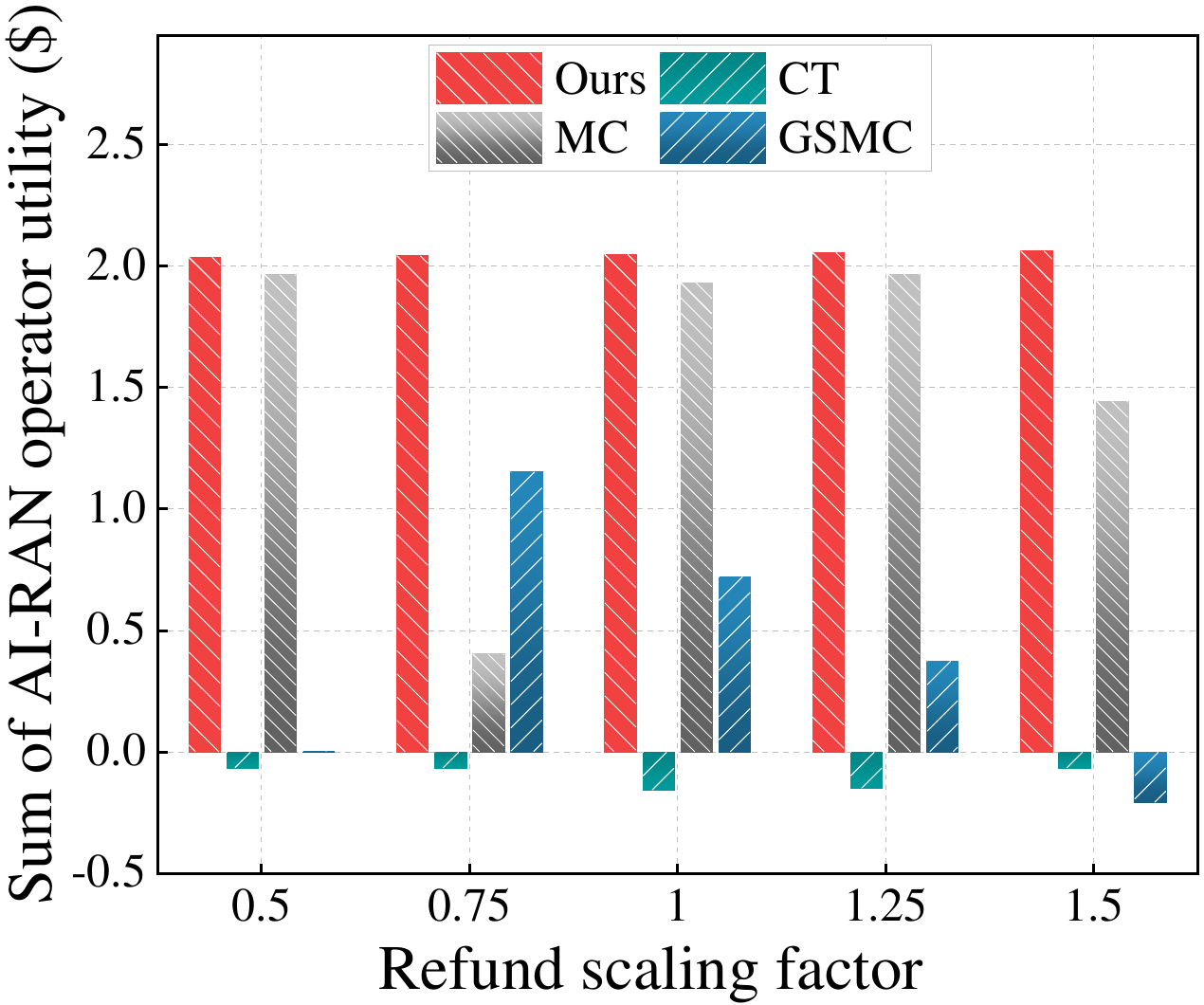}
        \caption{Total operator utility versus refund scaling factor.}
        \label{fig5a}
    \end{subfigure}
    \hfill
    \begin{subfigure}{0.49\linewidth}
        \centering
        \includegraphics[width=\linewidth]{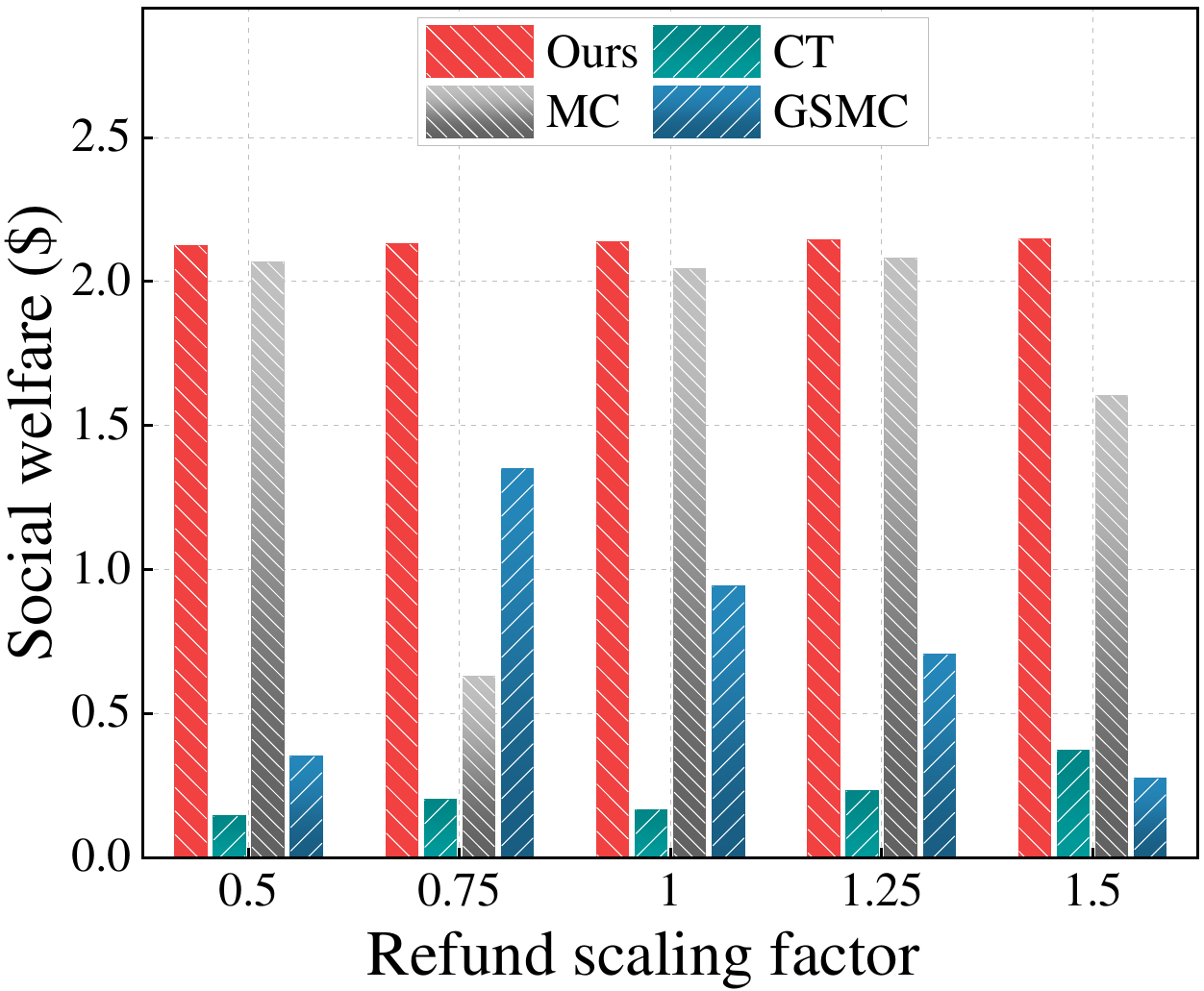}
        \caption{Social welfare versus refund scaling factor.}
        \label{fig5b}
    \end{subfigure}

    \vspace{0.6em}

    \begin{subfigure}{0.49\linewidth}
        \centering
        \includegraphics[width=\linewidth]{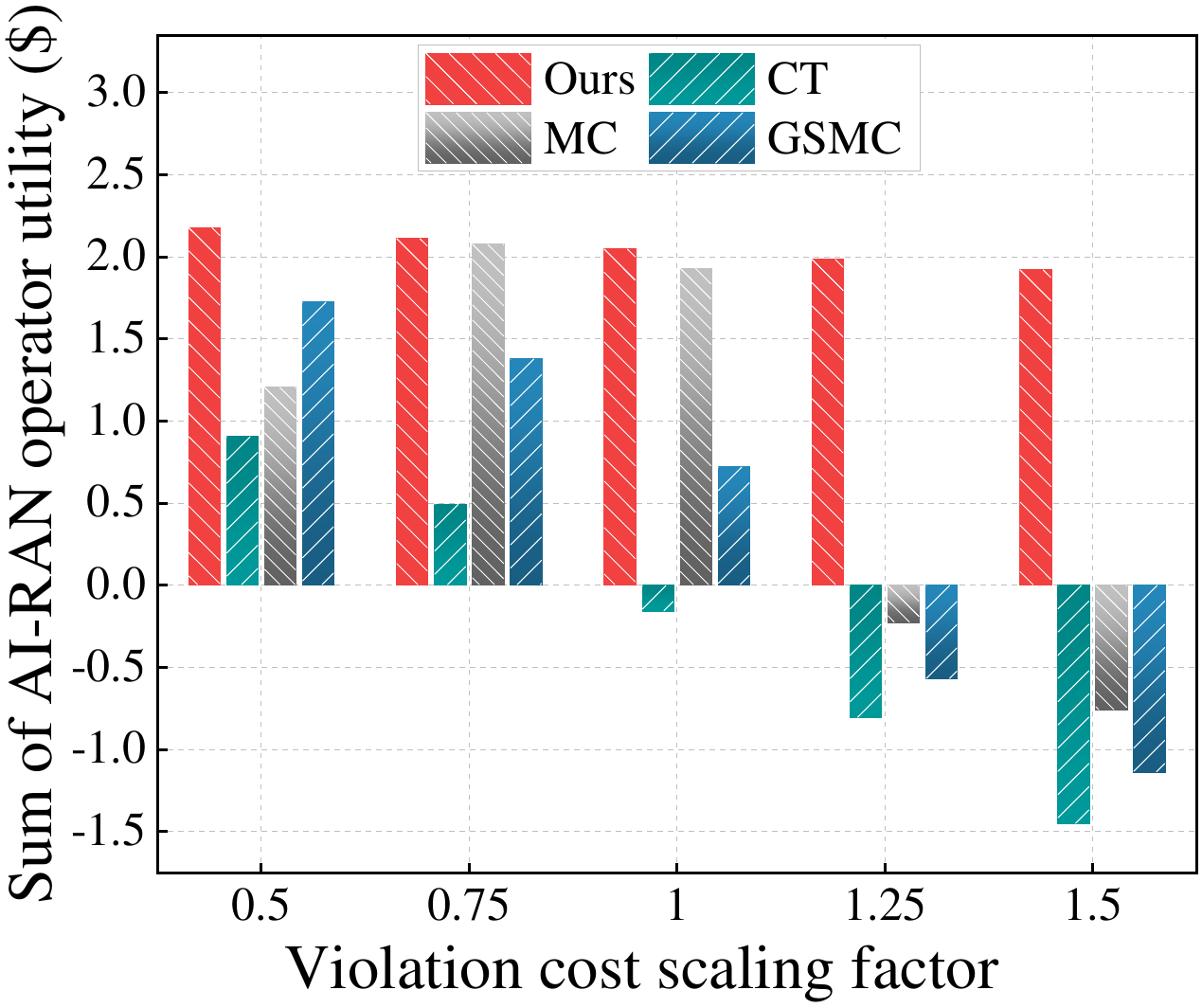}
        \caption{Total operator utility versus violation cost scaling factor.}
        \label{fig5c}
    \end{subfigure}
    \hfill
    \begin{subfigure}{0.49\linewidth}
        \centering
        \includegraphics[width=\linewidth]{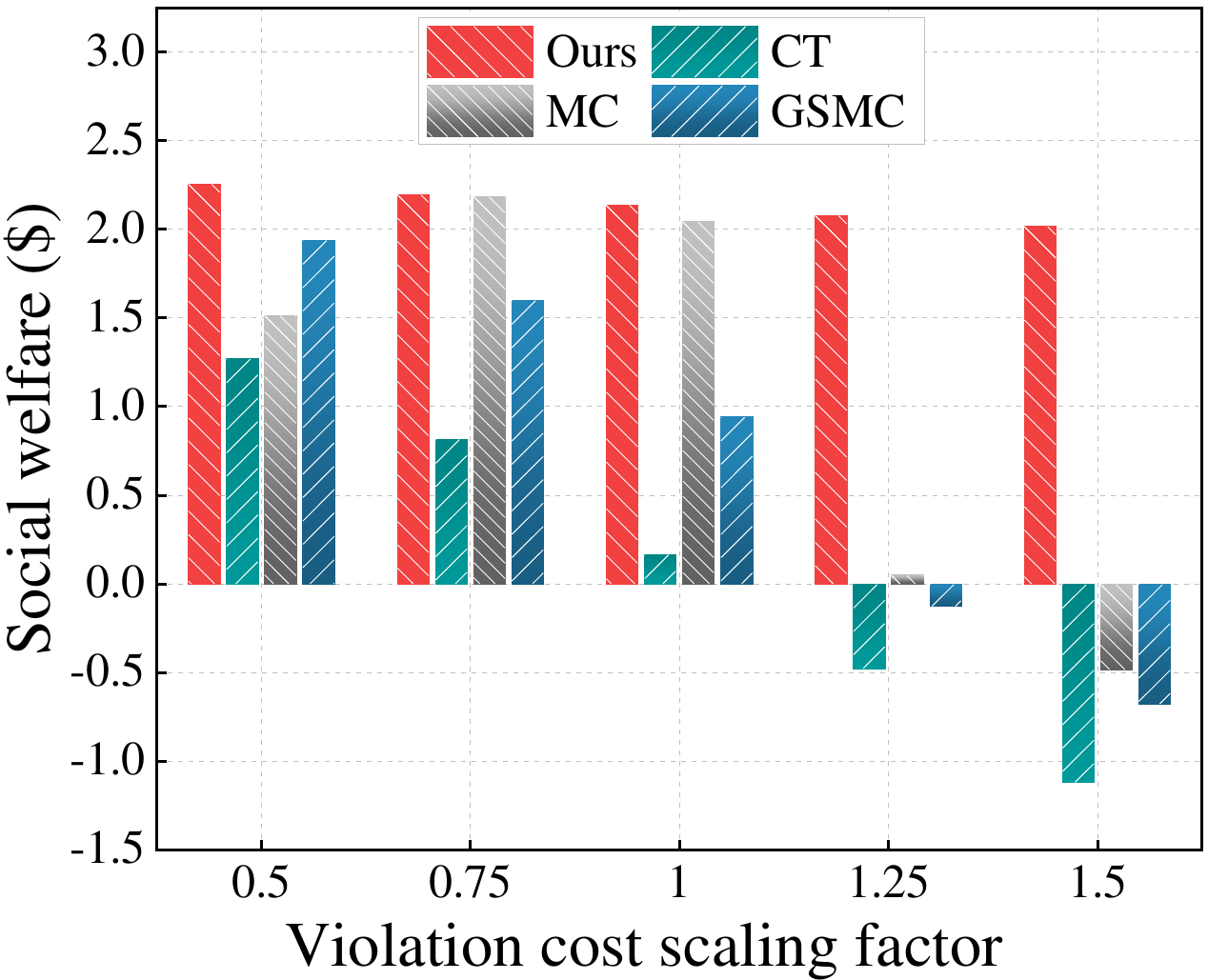}
        \caption{Social welfare versus violation cost scaling factor.}
        \label{fig5d}
    \end{subfigure}
    \caption{Performance comparison under varying refund and operator-side violation cost.}
    \label{fig:5}
\end{figure}

\subsubsection{Impact of Economic Parameters}
We next investigate two economic parameters, the refund coefficient $\bar{R}$ and the operator-side violation cost $\bar{C}$. Figs.~\ref{fig5a} and \ref{fig5b} show the results under different refund levels. A larger refund provides stronger compensation to users when the latency agreement is violated. At the same time, it also changes the operators' incentives in contract design.

Across all tested refund levels, the proposed method achieves the highest total operator utility and social welfare. When the refund scaling factor increases from $0.5$ to $1.5$, the total operator utility of the proposed method remains around $2.036$--$2.062$, and its social welfare increases slightly from about $2.126$ to $2.150$. For representative refund scaling factors $0.5$, $1$, and $1.5$, the proposed method improves the total operator utility over the best benchmark by about $3.7\%\sim42.9\%$, and improves the social welfare by about $2.8\%\sim33.9\%$. These results show that the proposed method is robust against the variation of the refund value.

Figs.~\ref{fig5c} and \ref{fig5d} show the results under different violation costs. As the violation cost increases, operators become more sensitive to latency-agreement violation, and the market becomes more risk-sensitive. For representative violation-cost scaling factors $0.5$ and $1$, the proposed method improves the total operator utility over the best benchmark by about $6.1\%\sim26.2\%$, and improves the social welfare by about $4.6\%\sim16.4\%$. When the scaling factor increases to $1.5$, all three benchmarks produce negative social welfare, while the proposed method still achieves a positive total operator utility of about $1.922$ and a positive social welfare of about $2.017$. These results verify the benefit of jointly updating the matching outcome and the contract menus under stricter reliability penalties.

\begin{figure}[!t]
    \centering
    \begin{subfigure}{0.49\linewidth}
        \centering
        \includegraphics[width=\linewidth]{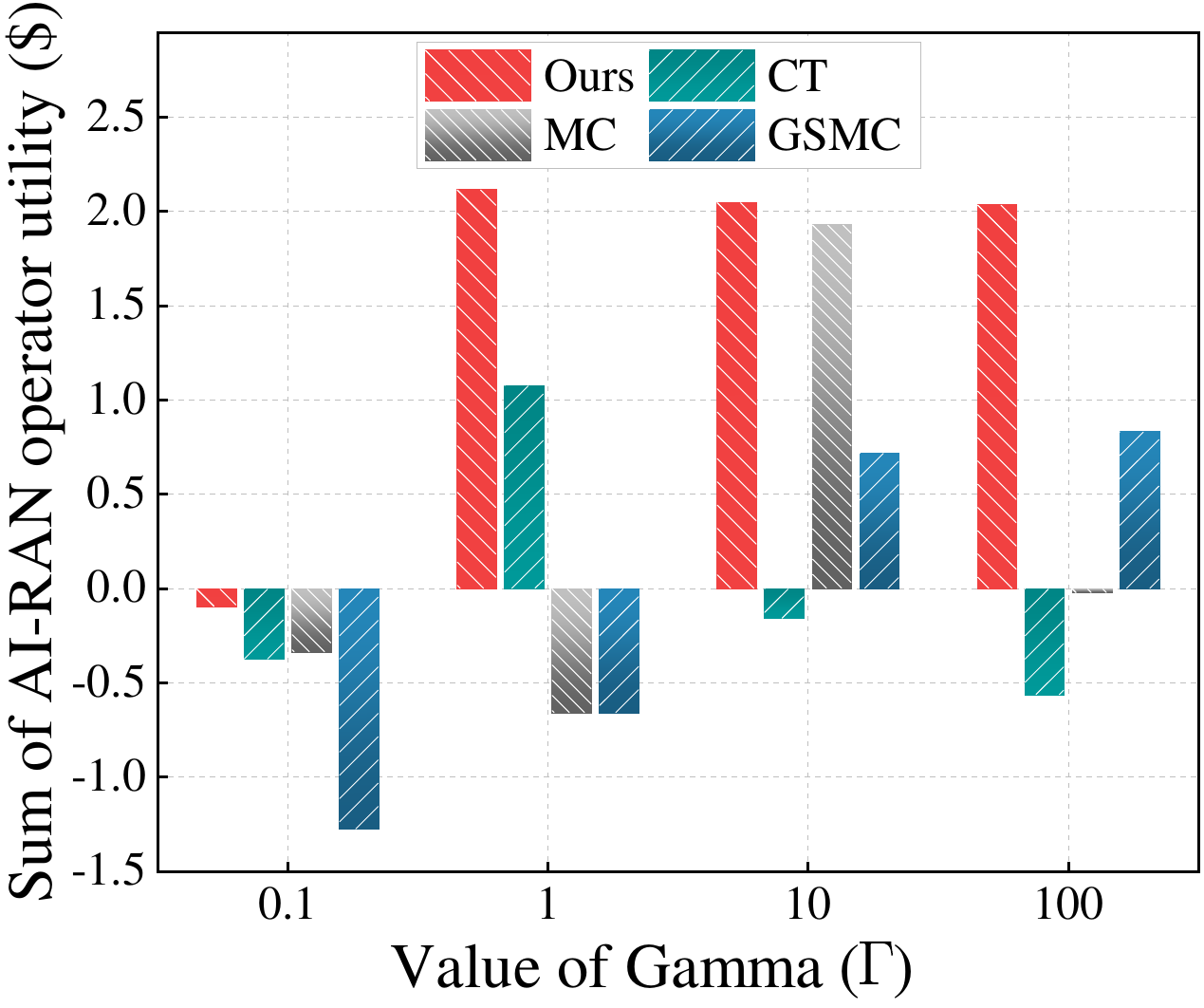}
        \caption{Total operator utility versus Dirichlet parameter $\alpha$.}
        \label{fig6a}
    \end{subfigure}
    \hfill
    \begin{subfigure}{0.49\linewidth}
        \centering
        \includegraphics[width=\linewidth]{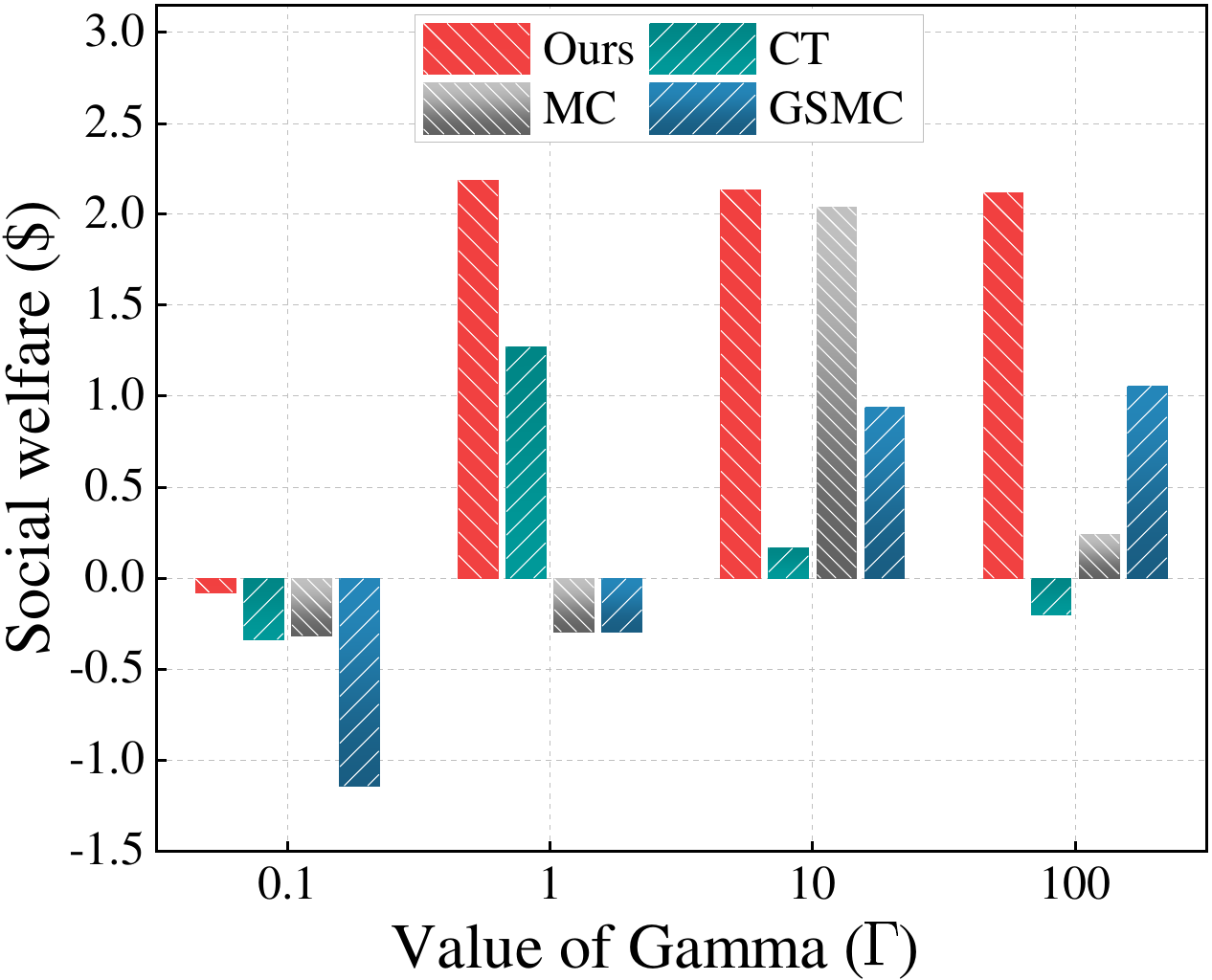}
        \caption{Social welfare versus Dirichlet parameter $\alpha$.}
        \label{fig6b}
    \end{subfigure}

    \vspace{0.6em}

    \begin{subfigure}{0.49\linewidth}
        \centering
        \includegraphics[width=\linewidth]{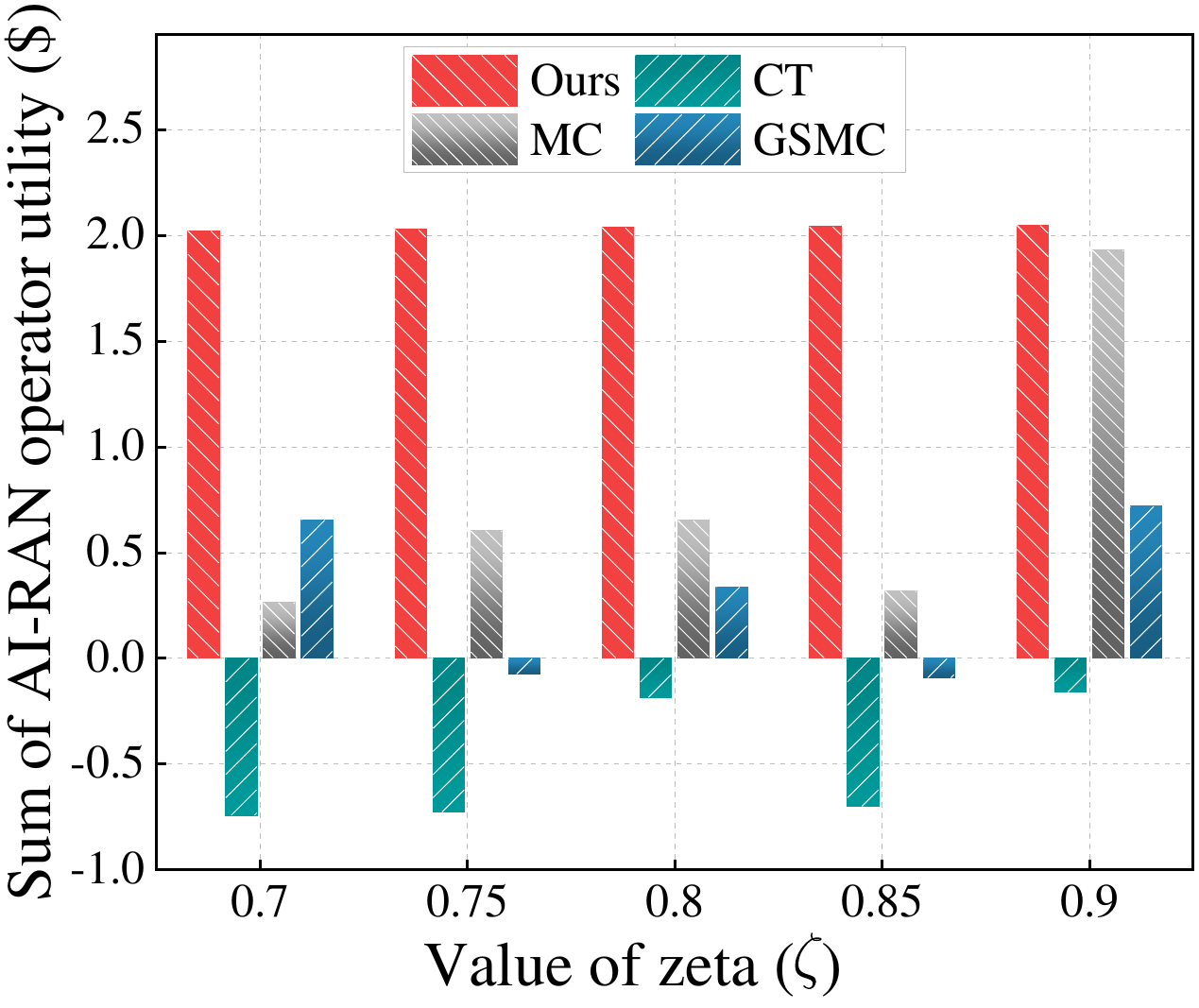}
        \caption{Total operator utility versus Chernoff parameter $\zeta$.}
        \label{fig6c}
    \end{subfigure}
    \hfill
    \begin{subfigure}{0.49\linewidth}
        \centering
        \includegraphics[width=\linewidth]{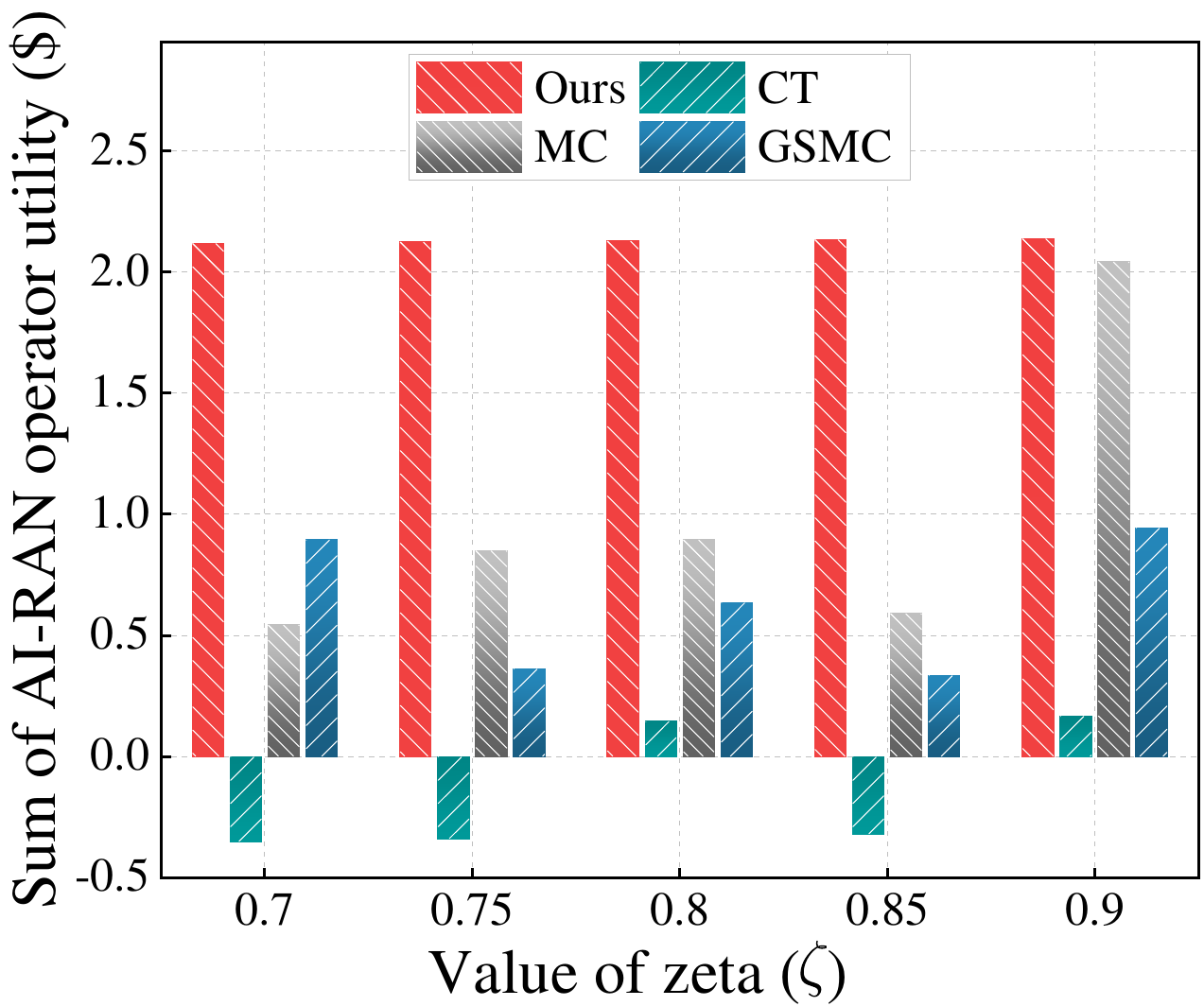}
        \caption{Social welfare versus Chernoff parameter $\zeta$.}
        \label{fig6d}
    \end{subfigure}
    \caption{Performance comparison under varying user composition and Chernoff parameter.}
    \label{fig:6}
\end{figure}

\subsubsection{Impact of User Type Distribution and Chernoff Parameter}
Finally, we evaluate the impact of the AI user types distribution and the Chernoff parameter. Figs.~\ref{fig6a} and \ref{fig6b} showcase the results under different Dirichlet parameters $\alpha$. This parameter controls the skewness of the AI user distribution. When $\alpha=0.1$, the market is highly imbalanced, and all methods yield negative total operator utility and negative social welfare. Even in this difficult case, the proposed method is the least negative one. Its social welfare is about $-0.076$, compared with $-0.336$, $-0.316$, and $-1.143$ for CT, MC, and GSMC, respectively.

When $\alpha$ becomes larger, the AI user types become more balanced, and the market performance improves significantly. For representative settings $\alpha=1$ and $\alpha=100$, the proposed method improves the total operator utility over the best benchmark by at least $96.4\%$, and improves the social welfare by at least $72.1\%$. This result indicates that the proposed framework is robust to different user composition patterns.

Figs.~\ref{fig6c} and \ref{fig6d} show the results under different Chernoff parameters $\zeta$. The proposed method remains stable across the tested range and achieves the highest total operator utility and social welfare for all tested values. For representative settings $\zeta=0.7$ and $\zeta=0.85$, the proposed method improves the total operator utility over the best benchmark by at least $208.7\%$, and improves the social welfare by at least $136.8\%$. Therefore, the proposed method is robust to different approximation settings in the latency violation model.

\section{Conclusion} \label{sec:8}
In this paper, we studied incentive mechanism design for AI task offloading in a competitive AI-on-RAN service market. We formulated a latency-price contract design problem under information asymmetry and competition, and proposed a mixed stable matching with contracts algorithm that jointly updates user-side matching and operator-side contract menus. Notably, we extend the conventional static matching-with-contracts model by jointly considering contract menus design of multiple competitive principals, principals-agents matching, and dynamic market-state evolution. Furthermore, we prove the existence of a mixed Nash equilibrium of our proposed dynamic matching-with-contracts formulation. Numerical results showed that the proposed method can derive monotone and interpretable contract menus for heterogeneous AI-RAN operators. Remarkably, the proposed method consistently improves market outcomes over benchmarks CT, MC, and GSMC when congestion, user heterogeneity, and reliability penalties become significant. For example, when $\sum_u=90$, the proposed method improves total AI-RAN operator utility and social welfare over the best benchmark by about $56.8.9\%$ and $51.7\%$, respectively. Under representative user-type settings $N=8$ and $N=12$, the improvements are at least $89.6\%$ and $63.1\%$, respectively. Under representative refund settings, the proposed method improves total operator utility and social welfare by at least $3.7\%$ and $2.8\%$, respectively. 
% These results demonstrate that jointly considering contract design, user selection, and congestion feedback is effective for competitive AI-on-RAN service markets.

\bibliographystyle{IEEEtran}
\bibliography{zhan}

\end{document}